%% file: 00_main.tex
\documentclass[format=acmsmall, review=false, nonacm]{acmart}
\usepackage{acm-ec-26}
\makeatletter
\let\@ACM@checkaffil\relax
\makeatother

\usepackage{booktabs} % For formal tables
\usepackage[ruled]{algorithm2e} % For algorithms
\usepackage{algorithmic}

\SetAlFnt{\small}
\SetAlCapFnt{\small}
\SetAlCapNameFnt{\small}
\SetAlCapHSkip{0pt}
\IncMargin{-\parindent}
\setcitestyle{authoryear}
\AtEndPreamble{%
    \theoremstyle{acmdefinition}
    \newtheorem{assumption}[theorem]{Assumption}}

\AtEndPreamble{%
    \theoremstyle{acmdefinition}
    \newtheorem{remark}[theorem]{Remark}}

\title[The Architecture of Illusion: Network Opacity and Strategic Escalation]{The Architecture of Illusion: Network Opacity and Strategic Escalation}

\author{Raman Ebrahimi}
\affiliation{%
  \institution{University of California, San Diego}
  \department{Electrical and Computer Engineering}
  \city{La Jolla}
  \state{CA}
  \country{USA}
}
\email{raman@ucsd.edu}

\author{Sepehr Ilami}
\affiliation{%
  \institution{Northeastern University}
  \department{Mechanical and Industrial Engineering}
  \city{Boston}
  \state{MA}
  \country{USA}
}
\email{ilami.a@northeastern.edu}

\author{Babak Heydari}
\affiliation{%
  \institution{Northeastern University}
  \department{Mechanical and Industrial Engineering}
  \city{Boston}
  \state{MA}
  \country{USA}
}
\email{b.heydari@northeastern.edu}

\author{Isabel Trevino}
\affiliation{%
  \institution{University of California, San Diego}
  \department{Economics}
  \city{La Jolla}
  \state{CA}
  \country{USA}
}
\email{itrevino@ucsd.edu}

\author{Massimo Franceschetti}
\affiliation{%
  \institution{University of California, San Diego}
  \department{Electrical and Computer Engineering}
  \city{La Jolla}
  \state{CA}
  \country{USA}
}
\email{mfranceschetti@ucsd.edu}

\begin{abstract}

Standard models of bounded rationality typically assume agents either possess accurate knowledge of the population's reasoning abilities (Cognitive Hierarchy) or hold dogmatic, degenerate beliefs (Level-$k$). We introduce the ``Connected Minds'' model, which unifies these frameworks by integrating iterative reasoning with a parameterized network bias. We posit that agents do not observe the global population; rather, they observe a sample biased by their network position, governed by a locality parameter $p$ representing algorithmic ranking, social homophily, or information disclosure. We show that this parameter acts as a continuous bridge: the model collapses to the myopic Level-$k$ recursion as networks become opaque ($p \to 0$) and recovers the standard Cognitive Hierarchy model under full transparency ($p=1$).

Theoretically, we establish that network opacity induces a \emph{Sophisticated Bias}, causing agents to systematically overestimate the cognitive depth of their opponents while preserving the log-concavity of belief distributions. This makes $p$ an actionable lever: a planner or platform can tune transparency, globally or by segment (a personalized $p_k$), to shape equilibrium behavior. From a mechanism design perspective, we derive the \emph{Escalation Principle}: in games of strategic complements, restricting information can maximize aggregate effort by trapping agents in echo chambers where they compete against hallucinated, high-sophistication peers. Conversely, we identify a \emph{Transparency Reversal} for coordination games, where maximizing network visibility is required to minimize variance and stabilize outcomes. Our results suggest that network topology functions as a cognitive zoom lens, determining whether agents behave as local imitators or global optimizers.

\end{abstract}

\ccsdesc[500]{Networks~Network economics}
\ccsdesc[500]{Theory of computation~Algorithmic game theory}
\ccsdesc[300]{Theory of computation~Solution concepts in game theory}
\ccsdesc[500]{Applied computing~Economics}
\ccsdesc[500]{Theory of computation~Social networks}

\keywords{Behavioral Game Theory, Bounded Rationality, Network Games, Biased Belief Formations, Cognitive Hierarchy}

\begin{document}

\begin{titlepage}

\maketitle

% Optionally include a table of contents
% \setcounter{tocdepth}{2} % adjust to 1 if desired
% \tableofcontents

\end{titlepage}

% Paper body
\input{01_introduction}
\input{02_model}
\input{03_theoretical}
\input{04_mechanism_design}

\section{Conclusion}
The \emph{Connected Minds} framework formalizes the structural intuition that strategic behavior is shaped as much by \textit{who we see} as by \textit{how we think}. By treating network structure as a cognitive filter, we demonstrate that deviations from Nash Equilibrium are not solely the product of limited computational capacity (finite $k$), but also of limited observational capacity (finite $p$).

Theoretically, the model serves as a unification bridge. We established that the locality parameter $p$ interpolates continuously between the beliefs of the Level-$k$ model and the calibrated expectations of the Cognitive Hierarchy model. Crucially, we prove that this bridge is structurally stable: network-induced biases preserve the log-concavity of belief distributions. This ensures that agents, regardless of how warped their local view, perceive a coherent, unimodal world, ruling out chaos in favor of systematic, predictable shifts in behavior. Our analysis uncovers a fundamental limit to individual sophistication. Through the \emph{Poisson-Shift Convergence} theorem, we prove that network position eventually trumps cognitive depth. In opaque networks, even agents with infinite recursive reasoning capabilities ($k \to \infty$) fail to converge to the truth. Instead, they stabilize at a perceived sophistication level of $\tau/p$, optimizing against a phantom population of peers. This offers a structural explanation for persistent bubbles: agents may be locally hyper-rational yet globally blinded, systematically mistaking the echo of their own sophistication for the signal of the market.

From a policy perspective, we introduce the role of the \emph{Cognitive Designer}, revealing a stark duality in information architecture. We derive the \emph{Escalation Principle}, showing that in competitive environments (e.g., crowdsourcing, R\&D), a designer should engineer opacity to trap agents in high-performance echo chambers. Conversely, we identify the \emph{Transparency Reversal} for coordination games, where maximizing $p$ is strictly optimal to minimize variance. Furthermore, we highlight a distributional consequence: opacity acts as a \emph{sophistication tax}. While mid-level agents are the most \textit{elastic} to changes in transparency (per Theorem~\ref{thm:hierarchyofsensitivity}), opacity disproportionately \textit{distorts} the steady-state beliefs of the most sophisticated agents, who suffer the largest absolute divergence between their perceived ($\tau/p$) and actual ($\tau$) competitive environments.

\paragraph{Limitations and Future Research}
Our framework opens several avenues for future inquiry, driven by the limitations of the current analysis.
First, our treatment of $p$ is a reduced-form summary of informational locality. While the discounting kernel $p^{k-h}$ captures the fading visibility of distant types, it abstracts away from specific graph topologies. Future work should simulate the model on explicit adjacency matrices to map graph statistics, such as modularity and clustering coefficients, to the effective $p$ derived in our theoretical results.
Second, the current model is static. We analyze equilibrium behavior under fixed beliefs, but we do not model the co-evolution of network structure and strategic learning. A critical extension is to investigate dynamic settings: do agents infer $p$ from realized payoffs, or does the homophily of their network trap them in self-confirming incorrect beliefs?
Third, the model presents an identification challenge for empiricists. Since observational reach ($p$) and cognitive depth ($k$) can be observationally equivalent in static action data (confounded as $\tau/p$), testing these predictions requires experimental designs that exogenously manipulate visibility while holding incentives fixed.

Ultimately, \emph{Connected Minds} shifts the locus of behavioral intervention. Correcting strategic errors requires more than merely educating agents; it requires architectural reform. To dissolve polarized echo chambers or burst speculative bubbles, we must alter the topology of information flow, rendering the invisible layers of the hierarchy visible.

% In the interest of anonymization, please do not include acknowledgements in your submission.
%
%\begin{acks}
%
%	The authors would like to thank Dr. Maura Turolla of Telecom
%	Italia for providing specifications about the application scenario.
%
%	The work is supported by the \grantsponsor{GS501100001809}{National
%		Natural Science Foundation of
%		China}{http://dx.doi.org/10.13039/501100001809} under Grant
%	No.:~\grantnum{GS501100001809}{61273304\_a}
%	and~\grantnum[http://www.nnsf.cn/youngscientsts]{GS501100001809}{Young
%		Scientsts' Support Program}.
%
%
%\end{acks}

% Bibliography
\bibliographystyle{ACM-Reference-Format}
\bibliography{sample-bibliography}

\newpage

% Appendix
\appendix
\input{041_mechanism_design}
\input{91_numerical}
\input{92_proofs}

\end{document}

%% file: 01_introduction.tex
\section{Introduction}
The investigation of human decision-making has historically oscillated between two poles: the hyper-rational optimization of \textit{homo economicus} and the heuristic-driven, bounded rationality of real-world agents. Many models, such as Prospect Theory~\cite{tversky1979prospect} or limited search~\cite{simon1955behavioral}, explain systematic departures from rational choice, and laboratory experiments~\cite{nagel1995unraveling,smith1962experimental} document how such departures arise in strategic environments. A common simplifying assumption across much of this work is \emph{independent belief formation}: a decision-maker computes a best response as if their beliefs about others are formed separately from the social context in which those beliefs are acquired.

In socially embedded settings, however, best responses are shaped by everyday interaction, which quietly alters what we believe others believe. Strategic environments also frequently contain \emph{hierarchies of sophistication}: some participants devote substantial effort to anticipate others, while many do not. Yet even highly sophisticated actors rarely observe the full distribution of strategic depths directly; they infer it from limited neighborhoods of interaction and filtered signals.

This paper argues that independent belief formation is empirically untenable in such contexts: agents do not view the strategic landscape from an Archimedean point; they view it from a specific node within a complex topology. Topological positioning triggers socio-cognitive mechanisms where belief revision is no longer purely logical, but socially fabricated within an agent's immediate interactive field~\cite{burns2001socio}. Relatedly, the ``social bubble'' phenomenon~\cite{pariser2011filter,sunstein2018republic} suggests that an agent's perception of society is systematically skewed by their local connections. We therefore study how network structure can systematically distort perceived population composition, even when agents are strategically sophisticated.

We formalize this intuition through the \emph{Connected Minds} model. We embed the Cognitive Hierarchy (CH) framework within a network structure and introduce a continuous locality parameter $p \in (0,1]$ that quantifies the transparency of the social graph. The parameter bridges myopic agents ($p \to 0$), who effectively observe only a narrow local neighborhood, and global agents ($p=1$), who perceive the true hierarchy of the entire population as assumed in standard CH formulations. By relaxing the implicit accuracy of beliefs about population composition, the model isolates how topology acts as a cognitive filter.

Our analysis yields two main contributions. First, we establish the \emph{Poisson-Shift Convergence Theorem}. Standard epistemic intuitions suggest that as reasoning depth increases ($k \to \infty$), beliefs should converge to the truth. We show that in a biased network ($p<1$), sophistication does not eliminate distortion: beliefs stabilize at a phantom distribution characterized by a shifted parameter $\tau/p$. In other words, position can dominate cognition; a highly sophisticated agent embedded in a dense local cluster can confidently optimize for a systematically misperceived population.

Second, we extend the framework to a normative \emph{Mechanism Design} setting. We introduce a ``Cognitive Designer'' who manipulates network transparency to achieve social goals. We identify a policy trade-off governed by game structure. In games of strategic complements (e.g., technology adoption or effort), we derive an \emph{Escalation Principle}: opacity ($p<1$) can maximize aggregate effort by inducing agents to compete against overestimated local competition. Conversely, in coordination settings (e.g., the Beauty Contest), we identify a \emph{Transparency Reversal}, where full transparency is required to minimize belief variance and reduce payoff inequality. Together, these results imply that optimal information architecture depends on whether the designer prioritizes aggregate output or stability.

The remainder of the paper situates these contributions relative to prior work on bounded rationality, misspecified learning, and strategic interaction on networks, before developing the model and welfare implications in detail.

\paragraph{Contributions.}
We introduce a behavioral framework in which agents' iterative reasoning is filtered by network-mediated observation.
Our main contributions are:
(i) a parsimonious \emph{network locality} parameter $p\in(0,1]$ that endogenizes the perceived distribution of cognitive types via a tractable kernel $g_k(\cdot\mid p)$;
(ii) general comparative statics showing that changes in $p$ induce monotone shifts in beliefs while preserving regularity (e.g., log-concavity), enabling sharp predictions across broad classes of games;
(iii) an asymptotic characterization for Poisson populations in which perceived sophistication converges to an \emph{effective} mean $\tau/p$, generating stable, topology-induced miscalibration even as cognitive depth grows;
and (iv) a normative design problem in which a planner/platform chooses information architectures (captured by $p$) to trade off escalation, coordination, and inequality.

\section{Related Work}
Our work lies at the intersection of bounded rationality in strategic settings, learning with misspecified beliefs, and network economics. We connect cognitive hierarchy models to networked belief formation through a locality parameter~$p$, and study how modulating information structure (network transparency) shapes system-level welfare.

\paragraph{Bounded Rationality and Iterative Reasoning.}
Our model builds on the Level-$k$ and Cognitive Hierarchy (CH) frameworks, developed by \citet{nagel1995unraveling}, \citet{stahl1995players}, and \citet{costa2001cognition} to explain behavior in beauty contest games, a concept originally described by \citet{keynes1937general}. These frameworks were subsequently generalized in CH to account for responses to a distribution of lower-level types~\citep{camerer2004cognitive,wright2014level}. A key premise supporting iterative reasoning is overconfidence (the ``better-than-average'' effect)~\citep{hoffrage2022overconfidence}. While Level-$k$ models typically assume agents hold degenerate beliefs that all others belong to a single lower type (usually $k-1$), CH posits that agents hold accurate beliefs about the conditional distribution of lower-level types. However, empirical work finds systematic inconsistencies in such beliefs~\citep{fragiadiakis2017testing,crawford2007level,agranov2017stochastic,agranov2024beliefs}. Our contribution is to endogenize these distortions as a function of network position and information locality.

\paragraph{Belief Formation, Misspecification, and Social Learning.}
A related literature studies inference and learning when agents hold incorrect models of the environment. Small misspecifications can generate persistent biases and long-run incorrect learning~\cite{bohren2016informational}. In our setting, misspecification is topological: agents treat locally observed behavior as representative of the global population, with~$p$ governing the degree of locality. This is closely related to correlation neglect, where agents fail to account for shared information sources among neighbors' actions~\cite{eyster2005cursed,enke2019correlation}. Overconfidence may persist because it can be evolutionarily stable even when inaccurate~\cite{johnson2011evolution}, and it is empirically linked to important distortions in high-stakes environments such as corporate investment~\cite{malmendier2005ceo}.

\paragraph{Networks, Interaction, and Polarization.}
Our approach relates to work on strategic interaction on networks~\cite{galeotti2010network} and to agent-based approaches to opinion dynamics~\cite{banisch2019opinion}, but differs by studying non-equilibrium iterative reasoning on networks. The structural bias induced by locality also connects to echo chambers and algorithmic reinforcement of homophily in online platforms~\cite{cinelli2021echo}.

\paragraph{Mechanism Design, Information Design, and Welfare.}
Our work complements research on mechanism design with boundedly rational agents~\cite{de2019level,rivas2015mechanism} by focusing on the information structure rather than payoff rules: a ``Cognitive Designer'' modulates network transparency via~$p$ to target welfare objectives. Our ``Optimal Opacity'' results also resonate with work in information economics where transparency can reduce welfare~\cite{ostrizek2024acquisition}, and with platform settings in which agents anticipate algorithms and peer behavior~\cite{zhu2025look}. Finally, our welfare analysis relates to behavioral evidence on networked decision-making~\cite{kearns2012behavioral,papi2020ordered} and to normative considerations about efficiency--equity trade-offs~\cite{ismail2020one}.

%% file: 02_model.tex
\section{Connected Minds Model}
In this section, we formally define the strategic environment and the probabilistic mechanism by which network topology distorts belief formation. The model integrates the iterative reasoning of Cognitive Hierarchy theory with a parameterized network bias.

\subsection{Game Setup and Reasoning Hierarchy}
We consider a population of $N$ agents interacting in a strategic setting. Let $s_i^j$ denote player $i$'s $j$\textsuperscript{th} strategy, assuming player $i$ has a finite set of $m_i$ strategies.

The hierarchy of reasoning is anchored by Level-0 ($L_0$) thinkers. These agents are non-strategic and do not engage in iterative reasoning. Consistent with the literature, we assume level-0 thinkers choose according to a probability distribution not derived from maximizing a payoff function against an opponent. Specifically, we utilize a uniform distribution for level-0 players, implying that every available strategy is equally likely to be chosen: $\mathbb{P}_0(s_i^j) = \frac{1}{m_i}, \forall i \in {n_0}$, where $n_0$ denotes the set of level-0 players in the population. Agents with reasoning level $k > 0$ are strategic and operate under a model of bounded rationality. Their decision-making process decomposes into two stages: 1) \emph{Belief formation}: The agent forms a subjective belief about the distribution of reasoning levels in the opposing population, and 2) \emph{Best Response}: The agent calculates the expected action of the population based on their belief and selects a strategy that maximizes their expected payoff.

\subsection{Network-Based Belief Formation}
In the standard Cognitive Hierarchy (CH) model \cite{camerer2004cognitive}, a level-$k$ player assumes that the population consists of players with levels $0, 1, \dots, k-1$. Crucially, the standard model assumes that the level-$k$ player knows the true relative frequencies of these lower levels. For instance, if the true population follows a Poisson distribution, the level-$k$ player uses a truncated, normalized Poisson distribution as their belief.

The Connected Minds model relaxes this assumption of accuracy. We posit that agents interact on a network $G=(V, E)$, represented by an adjacency matrix $A$ where $A_{ij}=1$ if players $i$ and $j$ are connected. Agents do not observe the global population; \textit{they observe a sample biased by their network position.}

We define the belief of a level-$k$ agent about the distribution of agents on other levels ($h$) as $g_k(h)$. We maintain the standard assumption of overconfidence: A level-$k$ player acts as if they are the most sophisticated player in the game. They assign zero probability to the existence of any player with reasoning level $h \ge k$. Thus, $g_k(h) = 0$ for all $h \ge k$.

However, the belief $g_k(h)$ regarding lower levels ($h < k$) is distorted. We assume that social networks exhibit homophily regarding cognitive sophistication—agents are more likely to be connected to others with similar reasoning capabilities. Specifically, for nontrivial levels $k>1$ and under biased/local observation ($p<1$), a level-$k$ thinker is (weakly) more likely to be connected to a near-peer level-$(k\!-\!1)$ player than to a more distant low-level player (e.g., level $0$); at $p=1$ this homophily bias vanishes and the connection kernel is uniform across all $h<k$.

We model this bias using a network locality parameter $p \in (0, 1]$. The subjective belief $g_k(h)$ is defined as the true distribution $f(h)$ weighted by a kernel that decays with the "cognitive distance" between the observer ($k$) and the observed ($h$):
\begin{align}
    g_k(h) = \frac{p^{k-h}f(h)}{\sum_{l=0}^{k-1}p^{k-l}f(l)}, \quad \forall h \in {0, \dots, k-1} \label{eq:prob-dist}
\end{align}
Here, $f(h)$ is the objective, true probability mass function (PMF) of reasoning levels in the population (e.g., Poisson). 
$p \in (0, 1]$\footnote{We exclude $p=0$ because \eqref{eq:prob-dist} is defined via normalization. However, the limit is well-defined: as $p\to 0$, $g_k$ concentrates its mass entirely on the highest observable type $h=k-1$. In this limit, the belief structure mimics the classic Level-$k$ recursion (where Level $k$ responds only to Level $k-1$), though the aggregate population behavior may differ due to the exact mass distribution.} represents the reach or transparency of the network. The term $p^{k-h}$ represents the connection probability or relative visibility of a level-$h$ agent to a level-$k$ agent. Equation~\eqref{eq:prob-dist} encapsulates the agent's predisposition. As the difference $k-h$ increases (i.e., looking further down the hierarchy), the weight $p^{k-h}$ decreases (for $p < 1$). In Section~\ref{subsec:measure-found} we generalize this definition.

\subsection{Microfoundations of the Locality Parameter}\label{subsec:microfoundations}
While we model $p$ as a reduced-form weighting parameter, it admits rigorous microfoundations in both network topology and algorithmic information sorting. We provide two distinct structural interpretations that yield the exponential kernel $p^{k-h}$ naturally.

\paragraph{Interpretation 1: Cognitive Homophily (Stochastic Block Model).} 
Consider the population structure as a specialized Stochastic Block Model (SBM) or a Soft Random Geometric Graph. Let the cognitive hierarchy define a latent space where the distance between two agents $i$ (level $k$) and $j$ (level $h$) is $d_{ij} = |k-h|$.
If social ties are formed based on cognitive proximity (homophily), the probability of a link forming between these agents decays with their cognitive distance. Specifically, if the connection probability follows a geometric decay:
\begin{equation}
    \mathbb{P}(A_{ij}=1 \mid k, h) \propto p^{|k-h|}
\end{equation}
While this topology exposes agents to both lower ($h < k$) and higher ($h > k$) types, the agent's cognitive bound $k$ acts as a censoring filter. Consistent with the axioms of Cognitive Hierarchy theory, we assume the agent cannot distinguish types $h \ge k$ from noise or treats them as lower-level deviations. Consequently, \eqref{eq:prob-dist} represents the \emph{conditional} distribution of observed types, truncated to the agent's comprehensible support $\mathcal{H}_k = \{0, \dots, k-1\}$.
Then, the parameter $p$ functions as the \textit{decay factor of the homophily kernel}. A lower $p$ implies a society strictly stratified by intellect (strong homophily), while $p \to 1$ implies a society where cognitive distance does not inhibit connection (weak homophily).

\paragraph{Interpretation 2: Algorithmic Ranking and Attention.} 
In digital environments, $p$ represents the \textit{ranking decay} of a content curation algorithm. Consider a social network (e.g., LinkedIn or Twitter/X) feed algorithm that sorts potential peers or posts. Users have a finite attention span.
Suppose a platform's recommendation engine clusters peers by similarity—prioritizing level $k-1$ peers (rank distance 1), followed by level $k-2$ peers (rank distance 2), and so on. We model the user's attention using Rank-Biased Precision \cite{moffat2008rank}, where the probability of examining content at rank distance $r$ decays as $p^r$.
In this construction, the parameter $p$ generates the \textit{attention kernel}. When this attention filter operates on a population where types are distributed according to $f(h)$, the effective sample observed by the user is the product of the attention probability and the population frequency: $g_k(h) \propto p^{k-h}f(h)$.
Here, $p$ is an engineering parameter: the persistence of the user's scroll or the diversity setting of the recommender system.

\paragraph{Interpretation 3: Information Diffusion (Random Walk).} 
Alternatively, consider information passing through a network via a Random Walk with Restart (PageRank) \cite{tong2006fast,page1999pagerank}. If a level-$k$ agent seeks information, and that search propagates down the hierarchy with a continuation probability $p$ at each step (and a dampening/restart probability $1-p$), the stationary distribution of visited nodes relative to the source $k$ follows the geometric sequence $p^{k-h}$. Here, $p$ represents the \textit{conductivity} or transparency of the network to information flow.

\subsection{Parameter $p$ as a Sophistication Switch}\label{parameter_p}
The interpretation of $p$ is central to the model's insights. It acts as a continuous bridge between two distinct classes of behavioral models:
\begin{enumerate}
    \item $p \to 0$ (Myopic/Local Connections): In this limit, the weighting term $p^{k-h}$ decays extremely rapidly as $h$ moves away from $k$. The belief mass becomes overwhelmingly concentrated on the level immediately below the agent, $h = k-1$. Mathematically, $g_k(k-1) \to 1$. This simplifies the framework to the classic Level-$k$ Model, where a Level-2 player believes everyone is Level-1, and a Level-3 player believes everyone is Level-2. This corresponds to a player embedded in a tight, insular cluster (an echo chamber) who only interacts with peers slightly less sophisticated than themselves.
    \item $p = 1$ (Sophisticated/Global Connections): In this limit, $p^{k-h} = 1$ for all $h$. The weighting kernel is uniform. The belief becomes $g_k(h) = \frac{f(h)}{\sum_{l=0}^{k-1} f(l)}$. This recovers the standard Cognitive Hierarchy (CH) Model, where the agent uses the correct conditional distribution of lower-level types. This corresponds to a player with a global view, fully connected to the network, who accurately perceives the heterogeneity of the population.
\end{enumerate}
By varying $p$, the Connected Minds model allows us to explain why different reasoning models appear to hold in different contexts. Level-$k$ behavior is not necessarily a distinct cognitive process from CH behavior; rather, they are manifestations of the same cognitive process operating under different network topologies (low $p$ vs. high $p$).

\begin{figure}[ht]
    \includegraphics[width=\textwidth]{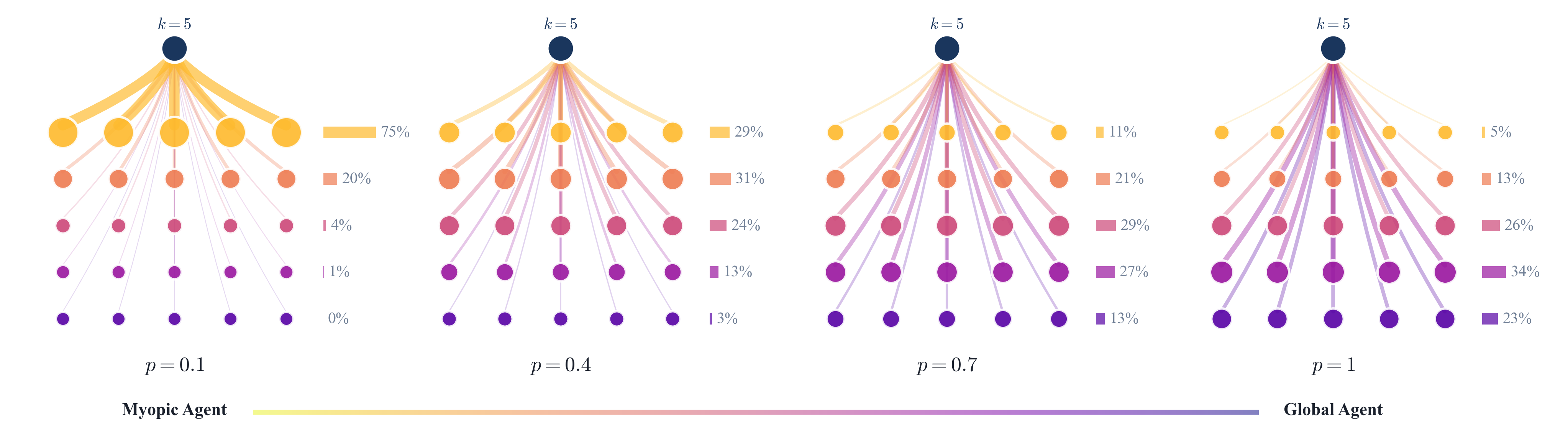}
    \caption{Example with k=5 (reasoning levels) showing how filtering happens. Agents only see neighbors with specific probability weights. Layout optimized to show connection spread clearly as we go from myopic agent (left) to global agent (right).}
    \label{fig:effects_of_p_how}
\end{figure}

%% file: 03_theoretical.tex
\section{Theoretical Properties}\label{sec:theory}
In this section, we analyze the structural and behavioral implications of the Connected Minds model. We begin by situating the framework within the classic Beauty Contest Game \cite{bosch2002one} to illustrate how the locality parameter $p$ shifts traditional level-$k$ predictions. We then broaden our scope to derive measure-theoretic foundations, establishing that the cognitive biases induced by network structure preserve the fundamental stability of belief distributions while systematically shifting their mass.

\subsection{The Beauty Contest Game: A Benchmarking Case}
The \emph{Guessing 2/3 of the Average} game, or Beauty Contest game, serves as the standard Drosophila for investigating depth of reasoning in experimental economics. For this game, a player wants their chosen number to be as close as possible to the target value, which is 2/3 of the average of all numbers chosen.\footnote{Originally, it is $p\times Average$ but the most used $p$ is $2/3$.} Let the average of all chosen numbers be $\bar s = \frac{1}{N}\sum_{j=1}^N s_j$. It is well known that the unique Nash equilibrium for the classic beauty contest game is $s_j=0$ for all $j$. 

In this framework, a player's action depends on their level of reasoning, $k$. A level-$k$ player simulates the thinking of all players they believe exist (i.e., those with levels $l<k$) to form an expectation of the average guess and then plays a best response. The model is anchored by level-0 players, who are non-strategic. They don't reason about the game at all and are typically assumed to choose a number randomly. If choices are from $[0,100]$, their average action is: $s_0=50$. 

A player with level $k>0$ determines their action, $s_k$, by doing the following steps: (1) They calculate the predicted actions of all lower-level players ($s_1, \ldots, s_{k-1}$); (2) They form a belief about the proportion of these lower-level players in the population using the belief function $g_k(h)$ in \eqref{eq:prob-dist}; (3) They calculate their expected global average, $E_k[\bar s]$; (4) They play their best response, which is $s_k = \frac{2}{3}E_k[\bar s]$. 

Writing out this mathematically, we have:
\begin{align}
    s_k = \big( \frac{2}{3} \big) \frac{\sum_{h=0}^{k-1} p^{k-h} f(h) s_h}{\sum_{l=0}^{k-1} p^{k-l} f(l)}
\end{align}

This model replaces the strong rationality assumptions of Nash Equilibrium with a new model of boundedly rational, iterative thinking. The most interesting finding from our model lies in the interpretation of the parameter $p$. 
As discussed in Section ~\ref{parameter_p}, $p$ interpolates between Level-k and CH; here we show how it shifts predicted actions: (i) When $p\to 0$ (Myopic/Local Connections): The term $p^{k-h}$ heavily penalizes looking at distant levels. A player's belief becomes overwhelmingly focused on the level immediately below them ($h=k-1$). The model simplifies to the classic, simple level-k model where $s_k\simeq \frac{2}{3}s_{k-1}$. This represents a player whose connections are very ``local,'' leading them to make simple, iterated assumptions about others. (ii) When $p=1$ (Sophisticated/Global Connections): The parameter $p$ has no effect. A player averages over all lower levels they can conceive of, weighted by their (truncated) population frequency. This represents a player with a more ``global'' view, who tries to perform a more complex cognitive average.

Our model proposes that the structure of a player's social network can determine their effective reasoning style. Players in insular, echo-chamber-like networks might behave like low-$p$ agents, while players in diverse, well-connected networks might behave like high-$p$ agents. This allows us to explain why different reasoning models might be appropriate in different contexts, linking network topology directly to cognitive strategy. We start by looking into two simple scenarios where the real distribution of reasoning levels is derived from (1) Geometric and (2) Poisson distributions.

\begin{example}\label{example:geometric}
    Imagine a population in which $f(h)$ follows a Geometric distribution. A Geometric distribution gives us the form $f(h; q) = (1-q)^h q$. 
    By substituting this into \eqref{eq:prob-dist} we get:
    \begin{align}
        g_k(h) = \frac{p^{k-h}(1-q)^h q}{\sum_{l=0}^{k-1}p^{k-l}(1-q)^l q}=\frac{p^{-h}(1-q)^h }{\sum_{l=0}^{k-1}p^{-l}(1-q)^l}
    \end{align}
    and by defining $r = \frac{1-q}{p}$ we get:
    \begin{align}
        g_k(h) = \frac{r^h}{\sum_{l=0}^{k-1}r^l}
    \end{align}
    As we can see above, the belief follows a \textit{truncated geometric distribution} with parameter $r = \frac{1-q}{p}$. Here, $1-q$ is the true probability of thinking one level deeper. Since $p<1$, the perceived likelihood ratio of deeper reasoning $r$ is always greater than the true probability $1-q$; notably, if the network bias is strong ($p < 1-q$), then $r$ exceeds 1, implying the agent perceives deeper opponents as more likely than shallower ones. 
    
    To put this formally: if the true probability of a player attaining at least level $k+1$ given they have attained level $k$ is $\pi_\text{true}=1-q$, then a level-$k$ player's perceived continuation rate for players in their model is $\pi_\text{perc}= \frac{1-q}{p}$. Thus, for any $p<1$, it holds that $\pi_\text{perc}> \pi_\text{true}$.

\end{example}
Having $f(h)$ follow a Geometric distribution implies that reasoning is a sequential process with a constant cost. Imagine players climbing a ladder of reasoning. At each rung, there is a fixed probability $q$ that they stop and a probability $1-q$ that they invest the cognitive energy to climb to the next rung. This describes a population with a higher number of simpler thinkers, where the count of ``survivors'' at each level of sophistication drops by a constant factor. Example~\ref{example:geometric} tells us that when we have a population where the distribution of reasoning levels is Geometric, such that, as described above, the true probability of any player engaging in an additional level of reasoning is constant. Therefore, a player whose beliefs are formed through a local sampling of their connections (characterized by a locality parameter $p<1$) will systematically overestimate the propensity of others to engage in deep reasoning. 

\begin{example}\label{example:poisson}
    Now, consider a population where the distribution of reasoning levels is Poisson with a true mean sophistication of $\tau$. The mean of the level-k player's subjective belief distribution over lower levels, $\mu_{g_k}$, is strictly greater than the true conditional mean of the population they are reasoning about, $\mu_{f_{<k}}=\mathbb{E}_f[h|h<k]$, i.e. $\mu_{g_k} > \mu_{f_{<k}}$.
    
    Given a Poisson $f(h)$ we will have $f(h;\tau)=\frac{e^{-\tau}\tau^h}{h!}$ and by substituting in \eqref{eq:prob-dist} we get:
    \begin{align}
        g_k(h) = \frac{p^{k-h}\frac{e^{-\tau}\tau^h}{h!}}{\sum_{l=0}^{k-1}p^{k-l}\frac{e^{-\tau}\tau^l}{l!}}
    \end{align}
    The term $e^{-\tau}$ cancels out, and the denominator is simply a normalization constant, let's call it $Z_k$, that ensures the probabilities sum to 1. The belief is therefore proportional to the numerator:
    \begin{align}
        g_k(h)\propto \frac{p^{k-h}\tau^h}{h!} = p^k\frac{(\tau/p)^h}{h!}
    \end{align}
    Since $p\in[0,1]$, the perceived mean $\tau/p$ is always greater than or equal to the true mean $\tau$. This means that \textit{a player systematically overestimates the average sophistication of the population they are perceiving}. They are surrounded by higher-level thinkers and mistakenly believe the rest of the world is just as sophisticated, ignoring the silent majority of low-level thinkers.
\end{example}

Having $f(h)$ follow a Poisson distribution implies that reasoning ability is a naturally occurring trait. Like height or weight, most people are clustered around an average level of sophistication ($\tau$), with very few people at the extremes. It suggests a single, underlying cognitive process common to the population. This is the standard and most empirically supported assumption in the cognitive hierarchy literature, representing a ``normal'' population of thinkers. In this scenario, Example~\ref{example:poisson} tells us that the local sampling process creates an illusion that the average player is more sophisticated than they actually are. The arguments of the examples above are illustrated in Figure~\ref{fig:fig6_examples} and we formalize the generalized result in Theorem~\ref{thm:poissonshiftconvg}.

\begin{figure}[ht]
    \centering
    \includegraphics[width=0.95\linewidth]{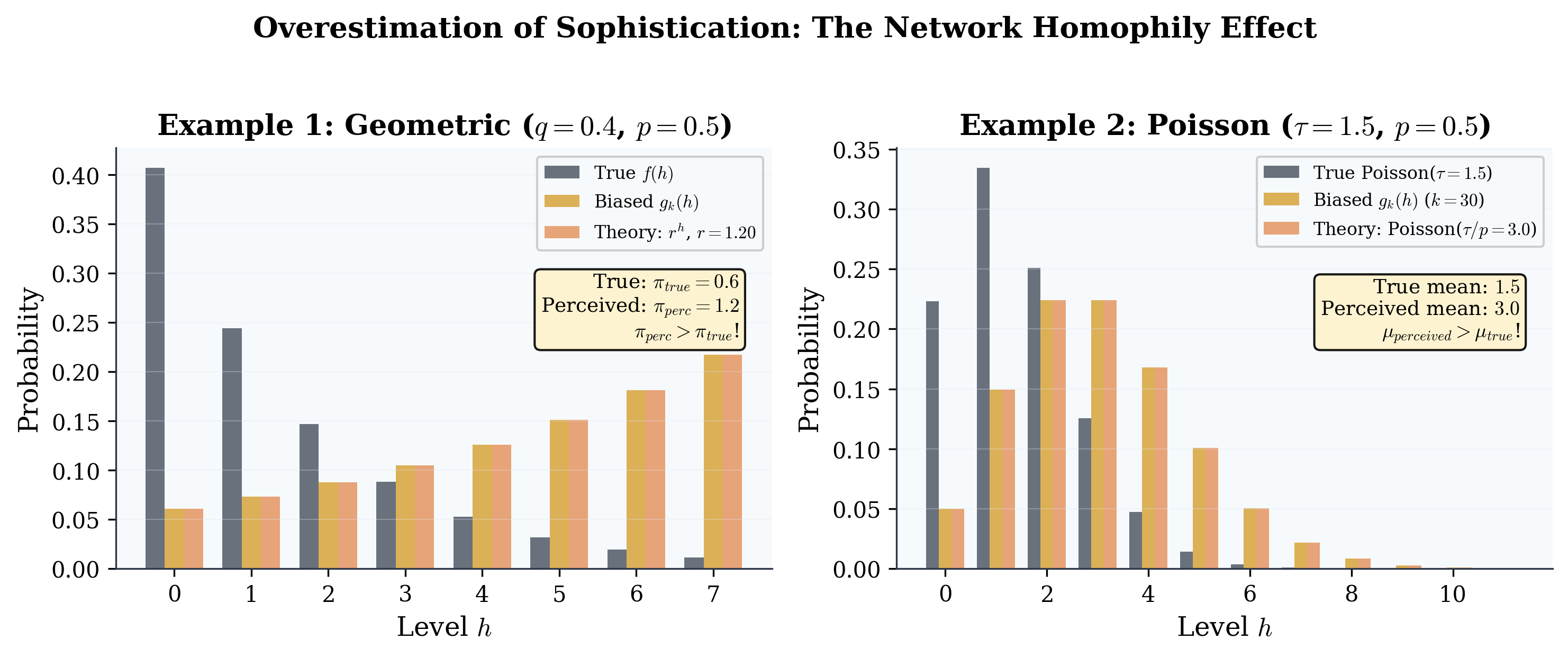}
    \caption{The impact of biased network sampling showing the difference between true and perceived distributions of levels for true Geometric (left) and Poisson (right) distributions.}
    \label{fig:fig6_examples}
\end{figure}

\subsection{Measure-Theoretic Foundations of Biased Beliefs}\label{subsec:measure-found}
We now rigorously define the probabilistic environment and derive the fundamental structural properties of the belief distribution $g_k(h)$. We treat the hierarchy of reasoning types as a discrete state space $\Omega = \mathbb{N}_0$ and analyze the probability measure induced by the parametric weighting kernel $K(k, h; p) = p^{k-h}$.

\subsubsection{The Parametric Family of Belief Distributions}
Let $\mathbb{N}_0 = \{0, 1, 2, \dots\}$ denote the set of possible cognitive levels (or ``types''). Let $f: \mathbb{N}_0 \to \mathbb{R}_{>0}$ be a probability mass function (PMF) representing the objective or true distribution of cognitive capacities in the population. The most common empirical specification for $f$ is the Poisson distribution, $f(h) = e^{-\tau}\tau^h/h!$, although our results hold for a broader class of distributions \cite{chen2014cognitive}.

A level-$k$ agent, denoted $L_k$, operates under the structural constraint that they can only conceptualize and best-respond to agents of lower reasoning depth, specifically the set $\mathcal{H}_k = \{0, 1, \dots, k-1\}$. The agent's subjective belief $g_k(h)$ over this support is defined as the normalized product of the objective prior and the exponential weighting kernel:
\begin{align}\label{eq:agentsubjbelief}
    g_k(h; p) = \frac{1}{Z_k(p)} p^{k-h} f(h) \cdot \mathbb{I}(h < k)
\end{align}
where $\mathbb{I}(\cdot)$ is the indicator function and $Z_k(p)$ is the partition function (or normalization constant) required to ensure $\sum_{h=0}^{k-1} g_k(h; p) = 1$. 

\begin{definition}
    (Cognitive Bias Regimes). The parameter $p$ defines the regime of cognitive bias:
    \begin{itemize}
        \item Standard CH ($p=1$): The weighting kernel is uniform ($1^{k-h}=1$). The agent uses the correct conditional probability $g_k(h) = f(h) / F(k-1)$. This assumes the agent perfectly understands the relative frequencies of lower types \cite{fragiadiakis2017testing,camerer2004cognitive}.
        \item Sophisticated Bias ($p < 1$): The kernel is strictly increasing in $h$. The agent believes the relevant population is concentrated just below their own level. This reflects ``Neuroticism'' or over-attribution of strategic depth to opponents \cite{bohren2016informational}.
        \item Naive Bias: Within $p \in (0, 1]$, naive bias corresponds to the kernel $p^h$, which is strictly decreasing in $h$ and therefore overweights low types. Equivalently, one can keep the kernel $p^{k-h}$ and allow $p>1$ \cite{bohren2016informational}. 
    \end{itemize}
\end{definition} 
We will focus our attention on the sophisticated bias case, as this is closer to reality \cite{ross1977false,marks1987ten}. However, our main results also hold for the naive bias, and we briefly discuss the intuition behind some results for the naive bias case. 

\subsubsection{Exponential Tilting and the Preservation of Structure}
The transformation $f(h) \mapsto g_k(h)$ represents a truncated exponential tilting of the probability measure. A critical question for the stability of game-theoretic equilibria is whether this transformation preserves the ``nice'' properties of the prior, specifically log-concavity. Log-concave distributions are central to economics as they ensure the uniqueness of optima and satisfy the Monotone Likelihood Ratio Property (MLRP) \cite{chen2014cognitive, saumard2014log}. 

\begin{theorem}\label{thm:logconcavity}
    (Preservation of Log-Concavity under Biased Tilting).
    Let the objective distribution of types $f(h)$ be log-concave on its support $\mathbb{N}_0$. Then, for any bias parameter $p \in (0, 1]$ and any level $k \ge 1$, the subjective belief distribution $g_k(h; p)$ is log-concave with respect to $h$ (for both kernels $p^{(k-h)}$ and $p^h$).
\end{theorem}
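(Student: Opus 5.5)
The proof reduces to the discrete definition of log-concavity. A sequence $a(h)>0$ on a set of consecutive integers is log-concave if $a(h)^2 \ge a(h-1)a(h+1)$ for every interior point $h$; equivalently, $\log a$ has nonincreasing first differences. We apply this to $g_k(\cdot;p)$ on its support $\mathcal{H}_k=\{0,\dots,k-1\}$. On $\mathcal{H}_k$ the indicator equals one, and $Z_k(p)$ does not depend on $h$. For $k=1$ or $k=2$ there are no interior points, so the condition holds vacuously. We may therefore take $k\ge 3$ and $1\le h\le k-2$.

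The key step is to write the log of the belief as a sum:
\begin{align*}
\log g_k(h;p) = -\log Z_k(p) + (k-h)\log p + \log f(h), \qquad h\in\mathcal{H}_k .
\end{align*}
The first term is constant in $h$. The second term is affine in $h$, so its second difference is zero. Hence the second difference of $\log g_k$ equals that of $\log f$:
\begin{align*}
\log g_k(h+1)-2\log g_k(h)+\log g_k(h-1) = \log f(h+1)-2\log f(h)+\log f(h-1) \le 0,
\end{align*}
where the inequality is log-concavity of $f$ at the interior point $h$. In multiplicative form, the kernel factors cancel exactly, because $p^{k-h-1}p^{k-h+1}=(p^{k-h})^2$. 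The identical argument covers the naive kernel $p^h$, since $h\log p$ is also affine. It also covers $p>1$, and in fact any geometric tilt $c^h$. This reflects the general fact that products of log-concave sequences are log-concave and that geometric sequences are log-linear.

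Two technical points need checking.

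\textbf{Strict positivity.} We need $g_k>0$ on all of $\mathcal{H}_k$ so that logarithms are defined. This holds because $f>0$ by assumption and $p>0$. Consequently the support of $g_k$ is a contiguous integer interval, which is part of the definition of discrete log-concavity: there are no internal zeros.

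\textbf{Truncation.} Restricting a log-concave sequence to a contiguous subinterval preserves log-concavity, because the inequality only involves interior triples. We must confirm that truncating at $k-1$ never creates a triple straddling the boundary with a zero in the middle. This is immediate: $g_k$ vanishes only for $h\ge k$, which lies entirely outside $\mathcal{H}_k$. Viewed on all of $\mathbb{N}_0$, $g_k$ has support $\{0,\dots,k-1\}$ and is zero beyond it. For such sequences the standard convention (an interval support with the inequality on the support) is satisfied.

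No step is genuinely hard. The main point of care is getting the boundary convention right, so that the zero tail $h\ge k$ is handled by the interval-support definition rather than by the pointwise inequality. We will state that definition explicitly at the start of the proof.
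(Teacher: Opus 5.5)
Your proposal is correct and follows the paper's proof essentially line for line: both write $\ln g_k(h)$ as a constant plus a term affine in $h$ plus $\ln f(h)$, so the second difference of $\ln g_k$ equals that of $\ln f$, which is nonpositive. You also make explicit several details the paper leaves implicit: positivity, the interval support $\{0,\dots,k-1\}$, and the vacuous cases $k\le 2$. As you note, the same computation covers the kernel $p^h$.
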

% \begin{proof}
%     A discrete distribution $q(h)$ is log-concave if the sequence satisfies the condition $q(h)^2 \ge q(h-1)q(h+1)$ for all $h$ in the interior of the support. This is equivalent to the concavity of the function $\psi(h) = \ln q(h)$, characterized by the second difference operator $\Delta^2 \psi(h) \le 0$.
%     Consider the log-belief function for the Connected Minds model:
%     $$\psi_k(h) = \ln g_k(h) = \ln ( \frac{p^k}{Z_k(p)} p^{-h} f(h) ) = \ln(p^k) - \ln Z_k(p) - h \ln p + \ln f(h)$$
    
%     To determine concavity, we compute the second difference with respect to $h$:
%     $$\Delta^2 \psi_k(h) = \psi_k(h+1) - 2\psi_k(h) + \psi_k(h-1)$$
    
%     Substituting the expansion:
%     $$\Delta^2 \psi_k(h) = [-(h+1)\ln p + \ln f(h+1)] - 2[-h\ln p + \ln f(h)] + [-(h-1)\ln p + \ln f(h-1)]$$
%     (Note that the constant terms $\ln p^k$ and $\ln Z_k(p)$ vanish upon differencing).
%     Grouping the linear terms involving $\ln p$:
%     $$-(h+1)\ln p + 2h\ln p - (h-1)\ln p = \ln p (-h - 1 + 2h - h + 1) = 0$$
    
%     The linear tilt $p^{-h}$ contributes a linear term to the log-density, which has a vanishing second derivative (or difference).
%     Thus, the curvature is determined entirely by the prior: 
%     $$\Delta^2 \psi_k(h) = \ln f(h+1) - 2\ln f(h) + \ln f(h-1) = \Delta^2 (\ln f(h))$$
%     By the assumption that $f$ is log-concave, $\Delta^2 (\ln f(h)) \le 0$. Therefore, $\Delta^2 \psi_k(h) \le 0$, proving that $g_k(h)$ is log-concave.
% \end{proof}

\emph{Insight and Implication:} 
Think of the parameter $p$ as a ``Cognitive Zoom Lens.'' When $p=1$ (standard model), the lens is clear. The agent sees the distribution of lower-level players exactly as it is. When $p < 1$ (Sophisticated Bias), the lens is ``telephoto.'' It magnifies the objects at the limit of the agent's vision ($h=k-1$). The agent will have the skewed view that everyone else is nearly as smart as they are. When the weighting favors lower indices (e.g., using a kernel like $p^h$), we enter a Naive Bias regime, in which the agent becomes convinced that the world is populated mostly by randomizers or simpletons. 

Theorem~\ref{thm:logconcavity} is crucial because it ensures that this ``zoom'' effect does not break the fundamental shape of the world. If the population naturally bell-curves around an average (log-concavity), the agent, no matter how biased their lens, still perceives a smooth, logical distribution, not a chaotic or fragmented one. This stability means that any weird behavior we observe in the game is due to the misplacement of probability mass, not the disintegration of the agent's reasoning process. 

In more technical terms, Theorem~\ref{thm:logconcavity} guarantees that the introduction of the bias parameter $p$, regardless of its magnitude, does not introduce multimodality or structural instability into the agent's belief system. If the population is distributed according to a Poisson distribution (which is log-concave), the agent's distorted view remains unimodal and smooth. This implies that ``erratic'' behavior in Level-$k$ models with this bias cannot be attributed to the shape of the belief distribution breaking down, but must be driven by the location shift of the probability mass. This provides a clean identification strategy for empirical researchers.

\subsubsection{The Score Function and Belief Sensitivity}
We next derive the sensitivity of the belief distribution to changes in the bias parameter. This is essential for comparative statics. By looking at \eqref{eq:agentsubjbelief}, we see that the properties of the belief distribution are encoded in the partition function. We can express $Z_k(p)$ as:
\begin{align}
    Z_k(p) = \sum_{j=0}^{k-1} p^{k-j} f(j) = p^k \sum_{j=0}^{k-1} p^{-j} f(j)
\end{align}
This formulation reveals a connection to the \emph{Moment Generating Function (MGF)} of the underlying distribution. Let $M_f(t) = \mathbb{E}_f[e^{tH}] = \sum_{h=0}^\infty e^{th} f(h)$ be the MGF of the prior $f$. If we define the natural parameter $\theta = -\ln p$, then $p^{-j} = e^{\theta j}$. The partition function $Z_k(p)$ is related to the truncated MGF of $f$ evaluated at $\theta$.

This connection allows us to derive the sensitivity of beliefs and the resulting strategic actions through the moments of the distribution.
\begin{proposition} \label{prop:score_mgf}
    (The Score Function via MGF). The sensitivity of the log-belief with respect to the bias parameter $p$ is proportional to the centered cognitive distance between the observed type $h$ and the agent's subjective expectation:
    \begin{align}
        \frac{\partial \ln g_k(h)}{\partial p} = \frac{1}{p} [ \mathbb{E}_{g_k}[H] - h ] \label{eq:score_mgf}
    \end{align}
\end{proposition}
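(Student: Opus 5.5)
Fix $k \ge 1$ and $h \in \mathcal{H}_k=\{0,\dots,k-1\}$. The plan is to differentiate the log of \eqref{eq:agentsubjbelief} directly and then recognise the derivative of the log-partition function as a subjective expectation. For $h<k$ the indicator equals one, so
\begin{align*}
\ln g_k(h;p) = (k-h)\ln p + \ln f(h) - \ln Z_k(p).
\end{align*}
The term $\ln f(h)$ does not depend on $p$. Differentiating the first term gives $(k-h)/p$. The whole computation therefore reduces to evaluating $\partial_p \ln Z_k(p)$.

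For the partition function we use $Z_k(p)=\sum_{j=0}^{k-1}p^{k-j}f(j)$. This is a finite sum of smooth functions on $(0,1]$, so we may differentiate term by term without any convergence issue:
\begin{align*}
\frac{\partial Z_k}{\partial p} = \sum_{j=0}^{k-1}(k-j)\,p^{k-j-1}f(j).
\end{align*}
Dividing by $Z_k$ turns each weight $p^{k-j}f(j)/Z_k$ into $g_k(j)$, so
\begin{align*}
\frac{\partial \ln Z_k}{\partial p}=\frac{1}{p}\sum_{j}(k-j)\,g_k(j)=\frac{1}{p}\bigl(k-\mathbb{E}_{g_k}[H]\bigr).
\end{align*}
Equivalently, in the natural parameter $\theta=-\ln p$, we have $\ln Z_k = -k\theta + \ln\sum_j e^{\theta j}f(j)$. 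This is a truncated cumulant generating function, whose $\theta$-derivative is the tilted mean. The chain rule with $d\theta/dp=-1/p$ then recovers the same expression, which ties the computation to the MGF connection noted before the statement.

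Subtracting the two pieces gives $\frac{1}{p}\bigl[(k-h)-(k-\mathbb{E}_{g_k}[H])\bigr]=\frac{1}{p}\bigl[\mathbb{E}_{g_k}[H]-h\bigr]$, which is \eqref{eq:score_mgf}. The $k$ terms cancel, so the score depends only on the centered distance $h-\mathbb{E}_{g_k}[H]$.

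There is no serious obstacle here. The only step needing care is the second: one must recognise $\partial_p\ln Z_k$ as a $g_k$-expectation, with the factor $1/p$ coming out correctly from $p^{k-j-1}=p^{k-j}/p$. As a sanity check, the score should satisfy $\sum_h g_k(h)\,\partial_p\ln g_k(h)=0$, as every score function must, and the derived formula does so immediately. Finally, at $p=1$ the formula still holds as a one-sided derivative, which is consistent with the domain $p\in(0,1]$.
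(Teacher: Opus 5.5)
Your proposal is correct and follows essentially the same route as the paper. Both split $\ln g_k(h) = (k-h)\ln p + \ln f(h) - \ln Z_k(p)$ and identify $\partial_p \ln Z_k$ as $\frac{1}{p}\mathbb{E}_{g_k}[k-H]$. The paper obtains that term through the chain rule in the natural parameter $\theta=-\ln p$, which you also mention; your direct term-by-term differentiation in $p$ is simply a more explicit version of the same computation.
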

Note that this score depends on the agent's level $k$ only through the subjective mean $\mathbb{E}_{g_k}[H]$. It effectively measures how far the specific type $h$ deviates from what the agent expects to see.

\begin{proposition} \label{prop:strategic_sensitivity}
    (Strategic Sensitivity). The rate of change of an agent’s expected reasoning depth with respect to the network locality parameter $p$ is strictly determined by the variance of their subjective belief:
    \begin{align}
    \frac{\partial \mathbb{E}_{g_k}[H]}{\partial p} = -\frac{1}{p} \text{Var}_{g_k}(H)
    \end{align}
\end{proposition}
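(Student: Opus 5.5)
We prove the identity by differentiating the subjective mean directly and then substituting the score function from Proposition~\ref{prop:score_mgf}. Since the support $\mathcal{H}_k=\{0,\dots,k-1\}$ is finite and does not depend on $p$, the mean $\mathbb{E}_{g_k}[H]=\sum_{h=0}^{k-1} h\, g_k(h;p)$ is a finite sum of smooth functions of $p\in(0,1]$. We may therefore differentiate term by term, with no interchange-of-limits issues:
\begin{align*}
\frac{\partial \mathbb{E}_{g_k}[H]}{\partial p} = \sum_{h=0}^{k-1} h\,\frac{\partial g_k(h)}{\partial p} = \sum_{h=0}^{k-1} h\, g_k(h)\,\frac{\partial \ln g_k(h)}{\partial p}.
\end{align*}

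Next, we insert \eqref{eq:score_mgf}, which gives $\frac{\partial \ln g_k(h)}{\partial p} = \frac{1}{p}\big(\mathbb{E}_{g_k}[H]-h\big)$. The derivative becomes
\begin{align*}
\frac{1}{p}\sum_{h=0}^{k-1} h\, g_k(h)\big(\mathbb{E}_{g_k}[H]-h\big) = \frac{1}{p}\Big(\big(\mathbb{E}_{g_k}[H]\big)^2 - \mathbb{E}_{g_k}[H^2]\Big) = -\frac{1}{p}\,\mathrm{Var}_{g_k}(H).
\end{align*}
This is exactly the claimed identity. As a sanity check on the score itself, note that $\ln g_k(h) = (k-h)\ln p + \ln f(h) - \ln Z_k(p)$. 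Differentiating gives $\frac{k-h}{p} - \frac{Z_k'(p)}{Z_k(p)}$, and $\frac{Z_k'(p)}{Z_k(p)} = \frac{1}{p}\,\mathbb{E}_{g_k}[k-H]$, which reproduces Proposition~\ref{prop:score_mgf}.

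An equivalent route uses the natural parameter $\theta=-\ln p$ introduced above, under which $g_k(h)\propto e^{\theta h} f(h)$ on $\mathcal{H}_k$ is a finite exponential family. The standard fact that the second derivative of the log-partition function equals the variance gives $\partial_\theta \mathbb{E}_{g_k}[H] = \mathrm{Var}_{g_k}(H)$. The chain rule with $d\theta/dp = -1/p$ then yields the result.

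No step is a genuine obstacle; the only care needed is sign bookkeeping in the score. As a remark, when $k\ge 2$ and $f>0$ the belief $g_k$ puts positive mass on at least two points, so the variance is strictly positive and the mean is strictly decreasing in $p$. When $k=1$ the belief is degenerate at $h=0$, and both sides of the identity vanish.
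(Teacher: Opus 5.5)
Your proof is correct and is essentially the paper's argument. The paper's proof is exactly your ``equivalent route'': it cites the exponential-family fact $\partial_\theta^2 \ln Z_k = \mathrm{Var}_{g_k}(H)$ with $\theta=-\ln p$ and then applies $d\theta/dp=-1/p$, while your primary computation via the score of Proposition~\ref{prop:score_mgf} just derives that cumulant identity by hand (with correct signs), making it self-contained rather than a different route.
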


\begin{corollary} \label{cor:elasticity}
    (The Variance-Sensitivity Identity). The elasticity of an agent's strategic expectation with respect to the network parameter $p$ is equal to the variance of their reasoning levels normalized by the expected depth:
    \begin{align}
        \epsilon_{\mathbb{E}, p} = \frac{\partial \ln \mathbb{E}_{g_k}[k-H]}{\partial \ln p} = \frac{\text{Var}_{g_k}(H)}{\mathbb{E}_{g_k}[k-H]}
    \end{align}
\end{corollary}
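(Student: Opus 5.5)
Corollary~\ref{cor:elasticity} follows from Proposition~\ref{prop:strategic_sensitivity} by a change of variables, so the plan is simply to differentiate the expected cognitive distance $D_k(p) := \mathbb{E}_{g_k}[k-H]$ with respect to $\ln p$.

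\emph{Step 1: positivity of $D_k(p)$.} Since $g_k$ is supported on $\mathcal{H}_k=\{0,\dots,k-1\}$, we have $k-H\ge 1$ almost surely under $g_k$. Hence $D_k(p)\ge 1>0$ for every $k\ge 1$ and $p\in(0,1]$, and the logarithm is well defined.

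\emph{Step 2: differentiate the mean.} By linearity, $D_k(p)=k-\mathbb{E}_{g_k}[H]$. Proposition~\ref{prop:strategic_sensitivity} gives
\begin{align*}
\frac{\partial D_k}{\partial p} = -\frac{\partial \mathbb{E}_{g_k}[H]}{\partial p} = \frac{1}{p}\,\mathrm{Var}_{g_k}(H).
\end{align*}
Differentiability is immediate, because $g_k(h;p)$ is a ratio of finite polynomials in $p$ with a strictly positive denominator $Z_k(p)$.

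\emph{Step 3: pass to the elasticity.} By the chain rule, $\partial/\partial\ln p = p\,\partial/\partial p$, so
\begin{align*}
\frac{\partial \ln D_k}{\partial \ln p} = \frac{p}{D_k}\cdot\frac{\partial D_k}{\partial p} = \frac{\mathrm{Var}_{g_k}(H)}{\mathbb{E}_{g_k}[k-H]},
\end{align*}
which is the claimed identity.

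\emph{Where the difficulty lies.} The one point needing care is the sign. The elasticity is stated for $k-H$, not for $H$, and differentiating $\ln\mathbb{E}_{g_k}[H]$ would instead produce $-\mathrm{Var}_{g_k}(H)/\mathbb{E}_{g_k}[H]$. The sign flip, and the fact that the denominator is bounded away from zero, both come from working with the distance $k-H$ rather than with $H$ itself. I would also note that the variance is translation-invariant, so $\mathrm{Var}_{g_k}(k-H)=\mathrm{Var}_{g_k}(H)$; this lets the identity be read as a variance-to-mean ratio of the cognitive distance.
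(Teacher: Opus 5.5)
Your proposal is correct and follows the route the paper intends: the corollary has no standalone proof there, and it is read off Proposition~\ref{prop:strategic_sensitivity} (whose proof effectively computes $\partial\mathbb{E}_{g_k}[k-H]/\partial p=\mathrm{Var}_{g_k}(H)/p$ via the log-partition function in $\theta=-\ln p$) by the chain rule $\partial/\partial\ln p=p\,\partial/\partial p$, exactly as you do. Your checks that $\mathbb{E}_{g_k}[k-H]\ge 1$ and that $g_k$ is smooth in $p$ are correct details the paper leaves implicit.
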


\emph{Intuition: }This establishes a link between uncertainty and pliability. If an agent is very certain that everyone is Level-1 (perhaps they are in a very homogenous local cluster), their belief is ``stiff''. Changing the network parameter $p$ (e.g., introducing them to a few more distant acquaintances) barely moves their expectation. If an agent admits a wide range of possible reasoning levels in their prior, they are highly ``pliable''. A small shift in $p$, say, a slight increase in network density, causes a massive shift in who they think they are playing against.

\subsection{Stochastic Dominance and Ordering}
In strategic environments, the exact probability of a specific type is often less important than the cumulative ``strength'' of the opponent. We now establish rigorous ordering results using the concept of Likelihood Ratio Dominance (LRD), which implies First Order Stochastic Dominance (FOSD). These results are pivotal for applying the theory of supermodular games \cite{levin2003supermodular}.

\subsubsection{Monotone Likelihood Ratio Property (MLRP) in Bias $p$}
We investigate how the entire distribution $g_k$ shifts as we vary $p$.

\begin{theorem}\label{thm:MLRP}
    (MLRP and Stochastic Ordering in $p$).
    The family of belief distributions $\{g_k(\cdot | p)\}_{p > 0}$ satisfies the Monotone Likelihood Ratio Property with respect to the parameter $p$ in the decreasing direction. Specifically, for any $p_1 > p_2$, the likelihood ratio, $\Lambda(h) = \frac{g_k(h | p_2)}{g_k(h | p_1)}$, is strictly increasing in $h$. Consequently, $g_k(\cdot | p_2)$ dominates $g_k(\cdot | p_1)$ in the Likelihood Ratio Order ($g_k(\cdot | p_2) \succeq_{LR} g_k(\cdot | p_1)$) and First Order Stochastic Dominance (FOSD).
\end{theorem}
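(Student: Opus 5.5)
The plan is to compute the likelihood ratio $\Lambda(h)$ in closed form from \eqref{eq:agentsubjbelief}, show that it is a geometric sequence in $h$, and then invoke the standard implication from likelihood ratio order to first-order stochastic dominance.

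\textbf{Step 1: compute the ratio.} For $h \in \mathcal{H}_k=\{0,\dots,k-1\}$ and $p_1>p_2>0$, the prior $f(h)$ cancels because $f>0$ on $\mathbb{N}_0$. This gives
\begin{align*}
\Lambda(h) = \frac{g_k(h\mid p_2)}{g_k(h\mid p_1)} = \frac{Z_k(p_1)}{Z_k(p_2)}\left(\frac{p_2}{p_1}\right)^{k-h} = C_k(p_1,p_2)\left(\frac{p_1}{p_2}\right)^{h}.
\end{align*}
Here $C_k(p_1,p_2)=\frac{Z_k(p_1)}{Z_k(p_2)}\bigl(\frac{p_2}{p_1}\bigr)^{k}>0$ does not depend on $h$. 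Since $p_1/p_2>1$, the ratio $\Lambda(h)$ is strictly increasing in $h$ on $\mathcal{H}_k$. Equivalently, $\ln\Lambda(h)$ is affine in $h$ with slope $\ln p_1-\ln p_2>0$. This is the same exponential-tilting structure used for Theorem~\ref{thm:logconcavity}.

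Both measures share the support $\mathcal{H}_k$, so the ratio is well defined everywhere relevant. For $k=1$ the support is a singleton and the statement is trivial (there is only one $h$, so the ratio is constant).

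\textbf{Step 2: likelihood ratio order.} Strict monotonicity of $\Lambda$ is exactly the definition of $g_k(\cdot\mid p_2)\succeq_{LR} g_k(\cdot\mid p_1)$.

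\textbf{Step 3: first-order stochastic dominance.} I would prove the standard implication directly rather than only cite it. Write $G_i(x)=\sum_{h\le x} g_k(h\mid p_i)$. Because $\Lambda$ is increasing and both $g_k(\cdot\mid p_1)$ and $g_k(\cdot\mid p_2)$ sum to one, there is a crossing index $h^*$ such that $g_k(h\mid p_2)\le g_k(h\mid p_1)$ for $h<h^*$ and $g_k(h\mid p_2)\ge g_k(h\mid p_1)$ for $h\ge h^*$.

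\begin{itemize}
\item For $x<h^*$, every term in the partial sum satisfies the first inequality, so $G_2(x)\le G_1(x)$.
\item For $x\ge h^*$, use $G_i(x)=1-\sum_{h>x}g_k(h\mid p_i)$. Every tail term satisfies the second inequality, so again $G_2(x)\le G_1(x)$.
\end{itemize}

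Hence $G_2\le G_1$ pointwise, which is FOSD. When $k\ge2$ the distributions differ, so the inequality is strict at some $x$.

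\textbf{Consistency check and main obstacle.} As a check, the conclusion agrees with Proposition~\ref{prop:strategic_sensitivity}: $\mathbb{E}_{g_k}[H]$ is decreasing in $p$. There is no substantive obstacle in this argument. The only care needed is in Step 3, where the crossing argument must handle the finite truncated support, and in stating strictness correctly for the degenerate case $k=1$.
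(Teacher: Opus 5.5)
Your proposal is correct and follows the paper's proof essentially step for step: the paper also cancels $f(h)$ to obtain $\Lambda(h) = C\,(p_1/p_2)^h$ with $C>0$ independent of $h$, and then uses $p_1/p_2>1$ to get strict monotonicity before passing to FOSD. The only difference is that the paper simply cites the standard implication from likelihood-ratio dominance to FOSD, whereas you verify it directly with a single-crossing argument and also handle the trivial case $k=1$; this is a correct and harmless elaboration.
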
 
% \begin{proof}
%     Let $p_1 > p_2$. Consider the ratio of the probability mass functions for a fixed $k$ and variable $h \in \{0, \dots, k-1\}$:
%     $$\Lambda(h) = \frac{\frac{1}{Z_k(p_2)} p_2^{k-h} f(h)}{\frac{1}{Z_k(p_1)} p_1^{k-h} f(h)}$$
%     The objective density $f(h)$ cancels out, isolating the effect of the bias:
%     $$\Lambda(h) = \frac{Z_k(p_1)}{Z_k(p_2)} \frac{p_2^{k-h}}{p_1^{k-h}} = [ \frac{Z_k(p_1)}{Z_k(p_2)} ( \frac{p_2}{p_1} )^k ] \cdot ( \frac{p_2}{p_1} )^{-h} = C \cdot ( \frac{p_1}{p_2} )^h$$
%     where $C$ is a positive constant independent of $h$.
    
%     Since $p_1 > p_2$, the ratio $\rho = \frac{p_1}{p_2} > 1$.
%     Thus, $\Lambda(h) = C \rho^h$ is a strictly increasing function of $h$.
%     An increasing likelihood ratio $\frac{g(h|p_2)}{g(h|p_1)}$ implies that the distribution with the lower parameter ($p_2$) places relatively more weight on higher values of $h$.
    
%     By standard stochastic dominance theorems, Likelihood Ratio Dominance implies first-order stochastic dominance (FOSD). Therefore, for any non-decreasing function $u(h)$:
%     $$\mathbb{E}_{p_2}[u(H)] \ge \mathbb{E}_{p_1}[u(H)]$$
% \end{proof}

LRD is a much stronger condition than FOSD, as it ensures the distribution shift is smooth and doesn't just move the mean but maintains the relative ordering of all possible types.

\emph{Strategic Insight:} Theorem~\ref{thm:MLRP} provides the functional engine for our comparative statics. It proves that within the Peer-Centric regime, decreasing $p$ (moving toward tighter, more local network clusters) unambiguously increases the expected sophistication of the perceived opponent. In games of strategic complements, such as technology adoption or price competition, a player who perceives a smarter opponent will generally play a more aggressive action. Thus, the locality parameter $p$ acts as a throttle for strategic escalation: lower $p$ values induce higher-order thinking and more extreme strategic responses.
% \coms{Refer to Fig~\ref{fig:fig5_stochastic_dominance}}

% \begin{figure}[ht]
%     \centering
%     \includegraphics[width=0.8\linewidth]{figures/fig5_stochastic_dominance.png}
%     \caption{\comr{This is verifying Theorem~\ref{thm:MLRP}}}
%     \label{fig:fig5_stochastic_dominance}
% \end{figure}

\subsubsection{Monotonicity in Cognitive Depth $k$}
We next examine how beliefs evolve as the agent's own level $k$ increases, holding the bias $p$ constant. Do smarter agents always believe the world is smarter? 
\begin{theorem}\label{thm:hierarchyexp}
    (The Hierarchy Expansion Theorem).
    For any fixed $p > 0$, the sequence of belief distributions $\{g_k\}_{k=1}^\infty$ is increasing in the Likelihood Ratio Order. That is, $g_{k+1} \succeq_{LR} g_k$.
\end{theorem}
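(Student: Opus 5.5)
The plan is to compute the likelihood ratio $\Lambda_k(h) = g_{k+1}(h)/g_k(h)$ directly from \eqref{eq:agentsubjbelief} and check the defining inequality of the likelihood ratio order on the union of supports. For $h \in \{0,\dots,k-1\}$, both distributions are positive, and
\begin{align*}
\Lambda_k(h) = \frac{p^{k+1-h} f(h)/Z_{k+1}(p)}{p^{k-h} f(h)/Z_k(p)} = \frac{p\, Z_k(p)}{Z_{k+1}(p)},
\end{align*}
which is a positive constant independent of $h$. The objective density $f$ and the kernel's $h$-dependence cancel exactly as in the proof of Theorem~\ref{thm:MLRP}, leaving a flat ratio on the common support.

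The only place the two distributions differ in shape is the new atom at $h = k$. There $g_{k+1}(k) = p f(k)/Z_{k+1}(p) > 0$ because $f>0$ on $\mathbb{N}_0$, while $g_k(k) = 0$ by the overconfidence truncation. So, in the extended sense, $\Lambda_k(k) = +\infty$. Hence $\Lambda_k$ is constant on $\{0,\dots,k-1\}$ and jumps to $+\infty$ at $k$, so it is (weakly) nondecreasing on $\{0,\dots,k\}$.

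To avoid division by zero, I would state the conclusion in the standard cross-product form of the likelihood ratio order: $g_{k+1}(h')\,g_k(h) \ge g_{k+1}(h)\,g_k(h')$ for all $h<h'$. I would verify it in two cases.
\begin{itemize}
\item \textbf{Case $h<h'\le k-1$.} The inequality holds with equality, since the ratio is constant.
\item \textbf{Case $h'=k$.} The right side vanishes because $g_k(k)=0$, while the left side is nonnegative.
\end{itemize}
The case $h' > k$ is trivial, since both sides are zero.

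Finally, as in Theorem~\ref{thm:MLRP}, LR dominance implies FOSD, so $\mathbb{E}_{g_{k+1}}[u(H)] \ge \mathbb{E}_{g_k}[u(H)]$ for every nondecreasing $u$. In particular, the perceived mean sophistication is nondecreasing in $k$, and strictly so for strictly increasing $u$, since the new mass sits strictly above the old support.

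The main obstacle is not computational but definitional: handling the support mismatch, since the ratio is undefined at $h=k$, and being careful that the order is weak rather than strict. Stating the likelihood ratio order in cross-product form resolves both issues. I would add a remark that this decomposition also yields the explicit identity $g_{k+1} = \lambda\, g_k + (1-\lambda)\,\delta_k$, where $\lambda = p Z_k(p)/Z_{k+1}(p) \in (0,1)$. This makes the expansion transparent: each new level keeps the old beliefs proportionally and adds mass only at the newly conceivable top type.
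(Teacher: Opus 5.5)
Your proof is correct and is essentially the paper's argument: the paper also shows that the odds $g_{k+1}(h_2)/g_{k+1}(h_1)$ and $g_k(h_2)/g_k(h_1)$ coincide on the common support $\{0,\dots,k-1\}$, where the extra factor $p$ cancels, and that the only strict change is the new atom at $h=k$, where $g_k(k)=0$. Your cross-product form of the likelihood ratio order and the mixture identity $g_{k+1}=\lambda g_k+(1-\lambda)\delta_k$ are cleaner packagings of the same computation that avoid dividing by $g_k(k)=0$.
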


\emph{Intuition: }Imagine a person climbing a mountain (the hierarchy). As they move from level $k$ to $k+1$, they don't change how they view the people already below them; their relative perspective on the base remains identical. However, by reaching a higher level, a new rank of people suddenly becomes visible to them (level $k$). This explains \emph{Strategic Consistency}. What we do not want to see is that as an agent gets smarter, their entire belief system about the lower-level people changes, which can lead to unstable predictions. Theorem~\ref{thm:hierarchyexp} shows that in the Connected Minds model, a Level-50 agent and a Level-51 agent agree on the relative proportions of Level-1 vs Level-2 players. The only difference is that the Level-51 agent is aware of Level-50s.

\subsection{Asymptotic Analysis and Belief Convergence}
Another central question is whether infinite sophistication ($k \to \infty$) leads to Nash Equilibrium. As $k \to \infty$, does the agent's belief converge to the truth, or does the parametric bias lead to a persistent delusion?

\subsubsection{The Limit Distribution}
Let us analyze the limit of $g_k(h)$ as $k \to \infty$.
$$g_k(h) = \frac{p^{k-h}f(h)}{\sum_{j=0}^{k-1} p^{k-j}f(j)} = \frac{p^{-h}f(h)}{\sum_{j=0}^{k-1} p^{-j}f(j)}$$
Define the infinite series $S(p) = \sum_{j=0}^{\infty} p^{-j} f(j)$. This series converges if the radius of convergence of the generating function of $f$ is greater than $p^{-1}$.

\begin{theorem}\label{thm:poissonshiftconvg}
    (The Poisson-Shift Convergence).
    Assume the objective distribution is Poisson with mean $\tau$: $f(h) = e^{-\tau}\tau^h/h!$.
    As $k \to \infty$, the belief distribution $g_k(\cdot | p)$ converges in Total Variation distance to a Poisson distribution with parameter $\hat\tau = \tau / p$.
\end{theorem}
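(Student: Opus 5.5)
The plan is to identify $g_k(\cdot\mid p)$ exactly as a truncated Poisson($\hat\tau$) distribution and then show that truncation at $k$ vanishes in total variation. Set $\hat\tau=\tau/p$ and $\pi(h)=e^{-\hat\tau}\hat\tau^h/h!$. As in Example~\ref{example:poisson}, cancel $e^{-\tau}$ and the common factor $p^k$ in \eqref{eq:prob-dist}:
\[
g_k(h\mid p)=\frac{\hat\tau^h/h!}{\sum_{j=0}^{k-1}\hat\tau^j/j!}\,\mathbb{I}(h<k)=\frac{\pi(h)}{\Pi(k-1)}\,\mathbb{I}(h<k),
\]
where $\Pi(k-1)=\sum_{j=0}^{k-1}\pi(j)$ is the Poisson($\hat\tau$) CDF. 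So $g_k$ is exactly $\pi$ conditioned on the event $\{H<k\}$. This also settles the convergence of the series $S(p)$: it equals $e^{\tau/p}$, because the Poisson generating function is entire.

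Next, compute the total variation distance directly. Split the sum into $h<k$ and $h\ge k$:
\[
\|g_k-\pi\|_{TV}=\tfrac12\Big[\sum_{h<k}\pi(h)\Big(\tfrac{1}{\Pi(k-1)}-1\Big)+\sum_{h\ge k}\pi(h)\Big].
\]
Both bracketed pieces equal $1-\Pi(k-1)$, so the distance is exactly $1-\Pi(k-1)=\mathbb{P}_{\pi}(H\ge k)$. This is the general fact that conditioning a measure on an event $A$ moves it by $1-\mu(A)$ in total variation.

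Finally, show that $\mathbb{P}_\pi(H\ge k)\to 0$ as $k\to\infty$. This is the tail of a convergent probability series, so it vanishes. For an explicit rate, use the Chernoff bound $\mathbb{P}(H\ge k)\le e^{-\hat\tau}(e\hat\tau/k)^k$ for $k>\hat\tau$, which decays superexponentially.

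No step is a real obstacle. The only point needing care is the exact cancellation of $p^k$ and $e^{-\tau}$, which makes $g_k$ a genuine conditional Poisson($\tau/p$) law rather than merely proportional to one. This holds for every fixed $p\in(0,1]$, and at $p=1$ it recovers convergence to the true distribution.
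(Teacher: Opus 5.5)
Your proposal is correct and takes the paper's route: both identify $g_k(\cdot\mid p)$ as Poisson$(\tau/p)$ conditioned on $\{H<k\}$, i.e.\ $\pi(h)/\Pi(k-1)$, and conclude from $\Pi(k-1)\to 1$; your exact computation $\|g_k-\pi\|_{TV}=1-\Pi(k-1)$, together with the Chernoff rate, spells out the total-variation step that the paper only asserts. One harmless slip in a side remark: with $f$ Poisson, $S(p)=\sum_j p^{-j}f(j)=e^{\tau/p-\tau}$ rather than $e^{\tau/p}$, because you dropped the factor $e^{-\tau}$ carried by $f$.
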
 
Note that for any finite $k$, the belief $g_k$ remains strictly supported on $\{0, \dots, k-1\}$. The convergence in Theorem~\ref{thm:poissonshiftconvg} implies that sufficiently deep reasoners perceive a distribution \emph{shaped} like $\text{Poisson}(\tau/p)$, effectively inflating the perceived sophistication of the population, even though their own beliefs are strictly bounded.

A key implication of Theorem~\ref{thm:poissonshiftconvg} is that network locality and cognitive sophistication can be confounded in purely behavioral data.
When $f$ is Poisson$(\tau)$, the perceived belief of sufficiently deep reasoners converges to Poisson$(\tau/p)$.
Consequently, in static one-shot data where the econometrician observes only final actions, a ``moderate-sophistication, low-transparency'' population ($\tau$, $p < 1$) can generate action distributions that are difficult to distinguish from a ``high-sophistication, high-transparency'' population ($\tau'$, $p'=1$) with $\tau' \approx \tau/p$.

This confounding is not merely a limitation; it yields a testable methodological warning: estimating standard Cognitive Hierarchy models under the maintained assumption $p=1$ may systematically misattribute network-induced belief distortions to cognitive deficits.
Identification of $p$ therefore requires either (i) auxiliary network information (e.g., observed exposure/mixing rates, graph modularity, or cross-type interaction frequencies), (ii) experimental variation that shifts exposure while holding incentives fixed (e.g., randomized feed diversification or forced cross-group matching), or (iii) panel outcomes where the same agents face exogenous changes in visibility regimes.
We treat this issue explicitly in the experiments (Appendix~\ref{app:numerical}) and view it as a central empirical prediction of the model.

\emph{Intuition:} The intuition here is profound for behavioral economics. Usually, we assume that as agents become more sophisticated ($k \to \infty$), they should converge to the Nash Equilibrium or the Truth. Theorem~\ref{thm:poissonshiftconvg} proves that \textit{network position trumps cognitive depth}. For example, in the stock market, a highly sophisticated trader (high $k$) might be able to process infinite amounts of data. However, if they are only connected to other sophisticated traders (low $p$), they will consistently overestimate the sophistication of the market. They will play a high-level strategy against a perceived population of geniuses ($\hat{\tau} = \tau/p$), while the market might actually be filled with retail (noise) traders (low $\tau$). Their infinite $k$ makes them perfectly optimize for a world that doesn't exist.

%% file: 04_mechanism_design.tex
\section{Mechanism Design: Cognitive Information Architecture}\label{sec:mechanism}

In previous sections, we established that the network locality parameter $p$ systematically distorts belief distributions $g_k(h)$ while preserving structural properties such as log-concavity (Theorem~\ref{thm:logconcavity}) and ordering via the Monotone Likelihood Ratio Property (Theorem~\ref{thm:MLRP}). In this section, we invert the analytical lens. We assume the role of a \emph{Cognitive Designer}, a social planner, platform architect, or organizational leader who cannot directly dictate agents' actions or upgrade their intrinsic sophistication levels ($f(h)$), but possesses the power to engineer the information-flow architecture—distinct from the interaction network—to modulate the intensity of boundedly rational recursion \cite{Chen_Ilami_Lore_Heydari_2026}.

This approach situates our model within the growing literature on Information Design in games \cite{bergemann2016bayes} and its unified treatment \cite{bergemann2019information}, as well as Bayesian Persuasion \cite{kamenica2011bayesian}, yet introduces a distinct behavioral twist. Classical information design focuses on revealing signals about an external state of the world to rational agents \cite{blackwell1953equivalent,milgrom1981good}. In contrast, the Connected Minds framework views the network itself as the signal \cite{banerjee1992simple,acemoglu2011bayesian,golub2010naive}. By manipulating the transparency parameter $p$, the designer does not change the fundamental truth of the population's ability, but rather changes the \textit{inferred truth} constructed by the agents \cite{hirshleifer1978private}. The designer effectively engineers the cognitive horizon of the population.
We apply the general framework established here to the Beauty Contest in Appendix~\ref{app:mechanism} and we derive the optimal transparency policy, analyze inequality, and provide computational algorithm for implementation. 

\subsection{The Design Environment}

\paragraph{Policy instrument and implementability.}
In the Connected Minds framework, $p$ summarizes an \emph{information architecture} rather than a payoff change.
Operationally, a platform or organization can shift $p$ by altering (i) cross-community exposure (e.g., recommendation diversification versus homophilous ranking), (ii) visibility of global population statistics (dashboards, aggregate skill distributions), and/or (iii) matchmaker policies that increase or reduce encounters across cognitive ``distance''.
We interpret changes in $p$ as subject to constraints such as privacy regulation, fairness mandates, and engineering costs; these constraints enter the planner's objective through an explicit cost of opacity/transparency and through feasibility restrictions on admissible policies (e.g., bounds on how personalized $p_k$ can be). 

Consider a population of $N$ agents interacting in a strategic setting. The designer's objective is to select a global network transparency parameter $p \in (0, 1]$ to maximize a social welfare function $W$. We focus on the broad class of supermodular games \cite{topkis1998supermodularity,milgrom1990rationalizability}.

\begin{assumption}\label{assumption:strat_comp}
    (Strategic Complements and Monotonicity). Let the action space be $S \subseteq \mathbb{R}$. A player $i$'s payoff function $u(s_i, s_{-i}, \theta_i)$ exhibits Strategic Complementarity if the marginal utility of increasing one's action is increasing in the actions of others. Specifically, for any level-$k$ agent, the best response function $\beta_k: \Delta(S) \to S$ satisfies the following monotonicity condition:
    \begin{align}
         \beta_k(g_k) \ge \beta_k(g'_k) \quad \text{if } g_k \succeq_{FOSD} g'_k
    \end{align}
    where $\succeq_{FOSD}$ denotes First-Order Stochastic Dominance. Furthermore, we assume the best response is strictly increasing in the expected sophistication of the opponent: $\beta_k(\mathbb{E}_{g_k}[H])$ is increasing in $\mathbb{E}_{g_k}[H]$.
\end{assumption}

Assumption~\ref{assumption:strat_comp} captures the dynamics of ``keeping up with the Joneses.'' It covers technology adoption \cite{katz1985network,farrell1985standardization} (where one adopts if they believe others are smart enough to adopt), effort provision in partnerships \cite{holmstrom1982moral,legros1993efficient} (working harder if you think your partner is capable), and speculative bubbles \cite{abreu2003bubbles}. In these settings, a player's action is driven by their estimation of the collective intelligence of the network. If a user believes their peers are sophisticated (high $h$), they are incentivized to play a high action $s_k$. In Appendix~\ref{app:mechanism} we shift from general analysis of strategic complements to a normative analysis of coordination and inequality.

\subsection{Optimal Opacity: The Escalation Principle}
A key insight from our theoretical analysis is that the parameter $p$ acts as a throttle for perceived sophistication. By adjusting $p$, the designer effectively changes the zoom level of the agents' cognitive lens. We now formalize how a designer can manipulate this lens to escalate aggregate equilibrium actions.

\begin{theorem}\label{thm:monotonicityagg}
    (The Monotonicity of Aggregate Effort). In a game of strategic complements satisfying Assumption~\ref{assumption:strat_comp} where the sequence of equilibrium actions is monotonic in cognitive depth (i.e., $s_k \ge s_{k-1}$ for all $k$), the aggregate equilibrium action (effort) of the population, $S^*(p) = \sum_{k} s_k^*(p) f(k)$, is strictly decreasing in the network transparency parameter $p$.
\end{theorem}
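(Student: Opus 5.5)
The plan is an induction on cognitive depth $k$. The goal is to show that each individual equilibrium action $s_k^*(p)$ is non-increasing in $p$, and strictly decreasing for at least one $k$ carrying positive mass. The claim about $S^*(p)=\sum_k s_k^*(p) f(k)$ then follows by summing, since $f(k)>0$ for every $k$ by the standing assumption $f:\mathbb{N}_0\to\mathbb{R}_{>0}$. I will assume enough integrability (e.g.\ bounded $S$) for the sum to converge and be compared termwise.

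\textbf{Base cases.} Level-$0$ play is fixed and independent of $p$. A level-$1$ agent's belief $g_1$ is degenerate on $h=0$ for every $p$, so $s_1^*$ is also independent of $p$.

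\textbf{Inductive step.} Fix $p_1>p_2$ and suppose $s_h^*(p_2)\ge s_h^*(p_1)$ for all $h<k$. The level-$k$ agent best-responds to the random opponent action $s_H^*(p)$ with $H\sim g_k(\cdot\mid p)$. I split the comparison into two moves.

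The first move is the belief shift. By Theorem~\ref{thm:MLRP}, $g_k(\cdot\mid p_2)\succeq_{LR} g_k(\cdot\mid p_1)$, and hence FOSD holds. Since $h\mapsto s_h^*(p_1)$ is non-decreasing by the hypothesis $s_h\ge s_{h-1}$, the pushforward distribution of actions under $g_k(\cdot\mid p_2)$ FOSD-dominates the one under $g_k(\cdot\mid p_1)$, both evaluated at the fixed action profile $s^*(p_1)$. This is the standard fact that FOSD is preserved under monotone maps.

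The second move is the action shift. Replacing $s_h^*(p_1)$ by the weakly larger $s_h^*(p_2)$ pointwise, at fixed belief $g_k(\cdot\mid p_2)$, again yields an FOSD-larger action distribution.

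Chaining the two moves and applying the monotonicity in Assumption~\ref{assumption:strat_comp} (read as monotonicity of $\beta_k$ in the FOSD order of perceived opponent play) gives $s_k^*(p_2)\ge s_k^*(p_1)$. This closes the induction.

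\textbf{Strictness.} This is the main obstacle. The weak inequality is routine, but strict decrease of $S^*$ requires some $k$ with a strict change. Take $k=2$. By the proof of Theorem~\ref{thm:MLRP}, the likelihood ratio is $C(p_1/p_2)^h$ with $p_1/p_2>1$, so the mass on $h=1$ strictly increases as $p$ decreases. Concretely, $g_2(1\mid p)=f(1)/(f(1)+p f(0))$, which is strictly decreasing in $p$. Hence $\mathbb{E}_{g_2}[H]$ strictly increases as $p$ decreases, consistent with Proposition~\ref{prop:strategic_sensitivity} since $\mathrm{Var}_{g_2}(H)>0$.

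To turn this into a strict change in action, I would invoke the strict clause of Assumption~\ref{assumption:strat_comp}, that $\beta_k$ is strictly increasing in $\mathbb{E}_{g_k}[H]$, which gives $s_2^*(p_2)>s_2^*(p_1)$. As a backup, if one insists on reasoning through actions, the strict belief shift yields a strict increase in expected opponent action provided $s_1^*>s_0$. This nondegeneracy must be assumed; it is implicit in "monotonic in cognitive depth" being non-trivial, and the statement should be read under it.

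Since $f(2)>0$, the strict term at $k=2$ plus the weak inequalities for all other $k$ give $S^*(p_2)>S^*(p_1)$. Strict monotonicity for higher $k$ is not needed, although it follows in the same way.
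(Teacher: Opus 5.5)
Your argument is essentially the paper's: strong induction on $k$, with the comparison split into two parts. The first is the belief shift from Theorem~\ref{thm:MLRP} (FOSD combined with monotonicity of $h\mapsto s_h(p_1)$). The second is the opponent-escalation step from the inductive hypothesis, followed by the best-response monotonicity of Assumption~\ref{assumption:strat_comp}. You phrase that last step through FOSD of the induced action distribution rather than the expected opponent action, which is a harmless generalization.

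Your strictness step at $k=2$ is a genuine addition: you use $g_2(1\mid p)=f(1)/(f(1)+pf(0))$, the $p$-independence of $s_0,s_1$, and the strict clause of the assumption. The paper's own proof stops at the weak inequality $S^*(p_2)\ge S^*(p_1)$, even though the statement claims strict decrease.
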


\begin{remark}
    (Scope of Monotonicity). Theorem~\ref{thm:monotonicityagg} applies to environments of escalation, such as technology adoption or competitive bidding, where higher sophistication leads to higher effort ($s_k \uparrow$). In coordination games like the Beauty Contest, actions typically decrease with sophistication ($s_k \downarrow$). For those cases, the logic reverses (Remark~\ref{rem:transparency}).
\end{remark}

\paragraph{Application: Gamified Crowdsourcing and Leaderboards.} 

Consider a crowdsourcing platform (e.g., Kaggle) that seeks to maximize aggregate participant effort. The platform acts as a Cognitive Designer by choosing the degree of leaderboard transparency, captured by $p$. Under full transparency ($p=1$), each level-$k$ participant observes an essentially unbiased cross-section of the population \emph{up to their own cognitive depth}. In particular, higher-$k$ participants see that most competitors are far less sophisticated (or exert systematically lower effort), so the contest appears ``too easy'' and the marginal return to additional effort is small. This induces \emph{effort moderation} among precisely those agents who contribute disproportionately to aggregate output, providing the intuition behind Theorem~\ref{thm:monotonicityagg}: as $p$ rises, the population-weighted equilibrium effort $S^*(p)$ falls because advanced agents internalize that the average opponent is weak and therefore optimally scale back. By contrast, when the platform reduces transparency ($p<1$)—for example by displaying only local ranks, near neighbors, or division-specific leaderboards—participants form more \emph{localized} beliefs that overweight a tight set of close rivals (those slightly above them in the visible neighborhood). This shifts attention away from the ``dead weight'' tail of low performers and toward a perceived cluster of strong-but-beatable competitors, intensifying perceived competition and sustaining higher effort. In practice, this is exactly the logic behind leagues/divisions and ``you vs.\ your neighbors'' leaderboard designs: limiting global visibility can raise aggregate effort by preventing high-performing agents from concluding (from the full distribution) that additional effort is unnecessary.

\emph{Strategic Intuition:} This result provides a counter-intuitive prescription for mechanism design: \emph{Ignorance fuels escalation.} If the social goal is to maximize aggregate effort, for example, maximizing contributions to an open-source project, accelerating viral adoption of a new protocol, or stimulating competitive bidding, the designer should \textit{limit} the information range.

Why does this happen? When $p$ is low (local connections), agents are embedded in tight, insular clusters (echo chambers). As shown in Section~\ref{sec:theory}, this creates a Sophisticated Bias where agents overestimate the density of peers just slightly below their own level. A Level-3 thinker, seeing mostly Level-2s in their local cluster, incorrectly assumes the whole world is Level-2. They play an aggressive Level-3 response.
Conversely, if $p=1$ (full transparency), that same Level-3 thinker would see the vast silent majority of Level-0s and Level-1s in the wider population. Realizing the average sophistication is actually quite low, they would moderate their effort. By engineering opacity, the designer effectively hides the ``dead weight'' of the population from the high-performers, inducing them to compete against a hallucinated population of peers.

\subsection{The Variance-Optimal Network}
While low $p$ maximizes scalar effort, it comes at a cost: variance. In the opaque limit ($p \to 0$), agents with different cognitive levels $k$ behave distinctively different because their beliefs are maximally distinct (each $k$ believes the world is $k-1$). In the transparent limit ($p \to 1$), beliefs converge toward the common truth, compressing the range of actions.
If the designer's goal involves coordination, consistency, or stability, they face a fundamental trade-off between inducing effort and maintaining order.

Let the social welfare function be defined as the aggregate action (which benefits from low $p$) minus a cost of implementation or instability (which increases with low $p$). We utilize the sensitivity analysis from Section~\ref{sec:theory} to derive a closed-form condition for optimality.

\begin{proposition}\label{prop:cognitive_FO}
    (The Cognitive First-Order Condition). Assume the planner seeks to maximize a linear objective of expected sophistication minus a linear cost of opacity $c$ (representing the cost of privacy, fragmentation, or variance):
    $$W(p) = \sum_k f(k) \mathbb{E}_{g_k}[H] - c(1-p)$$
    Any interior optimal network parameter $p^* \in (0, 1)$ must satisfy the first-order condition equating the marginal benefit of opacity (driven by belief variance) to its marginal cost:
    \begin{align}
        \sum_k f(k) \text{Var}_{g_k}(H) = c \cdot p^*
    \end{align}
    If the aggregate variance (marginal benefit) strictly exceeds the cost $c \cdot p$ for all $p$, the optimum lies at the opaque boundary ($p^* \to 0$); conversely, if the cost always dominates, the optimum is full transparency ($p^* = 1$).
\end{proposition}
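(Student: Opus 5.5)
Differentiate $W(p)$ term by term and use Proposition~\ref{prop:strategic_sensitivity}. Since $f$ does not depend on $p$, and each $\mathbb{E}_{g_k}[H]$ is a finite sum of smooth functions of $p$ on $(0,1]$ (the partition function $Z_k(p)$ is a positive polynomial in $p$), we get
\begin{align}
W'(p) = \sum_k f(k)\,\frac{\partial \mathbb{E}_{g_k}[H]}{\partial p} + c = -\frac{1}{p}\sum_k f(k)\,\mathrm{Var}_{g_k}(H) + c .
\end{align}
If the sum over $k$ is infinite, I would first justify term-by-term differentiation. Since $0\le \mathbb{E}_{g_k}[H]\le k-1$ and $\mathrm{Var}_{g_k}(H)\le (k-1)^2$, the differentiated series is dominated on any compact subinterval $[\underline p,1]\subset(0,1]$ by $\underline p^{-1}\sum_k f(k)(k-1)^2$. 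This is finite whenever $f$ has a finite second moment, which holds for Poisson. Uniform convergence of the derivative series then licenses the interchange.

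At an interior maximizer $p^*\in(0,1)$, the necessary condition $W'(p^*)=0$ gives $\frac{1}{p^*}\sum_k f(k)\mathrm{Var}_{g_k}(H)=c$. Multiplying by $p^*>0$ yields $\sum_k f(k)\mathrm{Var}_{g_k}(H)=c\,p^*$, with the aggregate variance evaluated at $p^*$. The interpretation follows directly. The first term of $W'$ is the marginal loss in perceived sophistication from raising $p$, which is the marginal benefit of opacity and is driven by belief variance. The constant $c$ is the marginal cost saving.

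For the boundary statements, define $V(p)=\sum_k f(k)\mathrm{Var}_{g_k}(H)$, so that $W'(p)=(c\,p - V(p))/p$. If $V(p)>c\,p$ for all $p\in(0,1]$, then $W'<0$ throughout and $W$ is strictly decreasing. The supremum is therefore approached as $p\to 0$, and I would phrase this as $p^*\to0$, since $p=0$ is excluded from the domain. If $c\,p> V(p)$ for all $p$, then $W'>0$ and $W$ is strictly increasing, so $p^*=1$.

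The main technical point is the interchange of differentiation and summation over $k$; the rest is an immediate application of Proposition~\ref{prop:strategic_sensitivity}. I would also note that the condition is only necessary, not sufficient, because $V(p)/p$ need not be monotone.
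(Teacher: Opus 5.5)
Your proposal is correct and follows essentially the same argument as the paper: differentiate $W$ term by term, substitute the variance identity of Proposition~\ref{prop:strategic_sensitivity}, and set $W'(p^*)=0$. Two of your steps are ones the paper leaves implicit: the justification of term-by-term differentiation (via $\mathrm{Var}_{g_k}(H)\le (k-1)^2$ and a finite second moment of $f$), and the sign analysis of $W'(p)=(c\,p-V(p))/p$ that handles the two boundary cases.
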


\emph{Implication:} This result links the \textit{structural properties of the graph} ($p^*$) directly to the \textit{cognitive uncertainty} of the agents ($\text{Var}(H)$). The term $\text{Var}_{g_k}(H)$ represents how unsure an agent is about their opponent's level.
If agents are cognitively uncertain (high belief variance), the lever of $p$ is powerful; small changes in network structure cause large shifts in behavior. To control this volatility, the planner must set $p$ higher (more transparency). Conversely, if the cost of maintaining opacity $c$ is high (e.g., regulatory pressure for open algorithms), $p^*$ naturally approaches 1.

\subsection{Targeting via Information Segmentation}
A sophisticated designer need not set a single global $p$. Modern platforms can algorithmically curate connections, effectively assigning a personalized $p_k$ to different users. This raises a critical question for targeted interventions: \textit{Who is most susceptible to network manipulation?} Should the designer show more network information to sophisticated users or naive ones?

\begin{definition}
    (The Sensitivity Gap). Define the \emph{Normalized Cognitive Sensitivity} (or \emph{Responsiveness Index}) of a level-$k$ agent as
    \[
        \bar{\eta}_k(p) \;=\; \frac{1}{k}\left|\frac{\partial s_k(p)}{\partial p}\right|.
    \]
    This is a level-normalized slope measuring how strongly a player's equilibrium action responds to marginal changes in network transparency.%
    \footnote{We use the term \emph{sensitivity} (rather than the standard economics term \emph{elasticity}) because $\bar{\eta}_k(p)$ is not a unit-free log-derivative (percentage-change) elasticity; it is a local responsiveness measure normalized by cognitive depth.}
\end{definition}

\begin{theorem}\label{thm:hierarchyofsensitivity}
    (The Hierarchy of Sensitivity). Assume $f(h)$ is Poisson($\tau$) and players play a linear best response $s_k = \alpha \mathbb{E}_{g_k}[H]$. For sufficiently large $k$, the Normalized Cognitive Elasticity $\bar{\eta}_k(p)$ is decreasing in $k$. That is, highly sophisticated agents are less sensitive to network manipulation \textit{per unit of cognitive depth} than moderately sophisticated agents.
\end{theorem}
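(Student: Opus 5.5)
The plan is to reduce $\bar\eta_k(p)$ to a closed form through Proposition~\ref{prop:strategic_sensitivity}, and then to control its dependence on $k$ with the Poisson limit of Theorem~\ref{thm:poissonshiftconvg}, made quantitative. Under the linear best response $s_k = \alpha\,\mathbb{E}_{g_k}[H]$, differentiating in $p$ and applying Proposition~\ref{prop:strategic_sensitivity} gives $\partial s_k/\partial p = -\tfrac{\alpha}{p}\mathrm{Var}_{g_k}(H)$. Hence
\[
\bar\eta_k(p) \;=\; \frac{|\alpha|}{p}\,\frac{V_k}{k}, \qquad V_k := \mathrm{Var}_{g_k}(H).
\]
With $p$ fixed, the claim is therefore that $V_k/k$ is eventually decreasing in $k$.

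Second, I identify the limit of $V_k$. Write $\lambda = \tau/p$. As in Example~\ref{example:poisson}, $g_k$ is exactly $\mathrm{Poisson}(\lambda)$ conditioned on $\{H \le k-1\}$. Theorem~\ref{thm:poissonshiftconvg} gives convergence of $g_k$ in total variation, but that alone does not control second moments. I would instead compare moments directly. Let $T_k = \sum_{h \ge k} e^{-\lambda}\lambda^h/h!$ be the discarded tail, and note that $T_k \le C\lambda^k/k!$ for $k > 2\lambda$. For $r=0,1,2$, the truncated sums $\sum_{h<k} h^r\,\mathrm{Poi}_\lambda(h)$ then differ from the full moments by $O(k^2\lambda^k/k!)$. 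Dividing by the normalizer $1-T_k$ yields
\[
|V_k - \lambda| \;\le\; C' k^2 \frac{\lambda^k}{k!}.
\]
In particular $V_k \to \lambda = \tau/p$. This already gives $\bar\eta_k \sim |\alpha|\tau/(p^2 k)$, which decays like $1/k$.

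Third, and this is the main obstacle, I need strict eventual monotonicity rather than just the $1/k$ asymptotics. I would write
\[
\frac{V_k}{k} - \frac{V_{k+1}}{k+1} \;=\; \frac{\lambda}{k(k+1)} \;+\; \Big(\frac{V_k-\lambda}{k} - \frac{V_{k+1}-\lambda}{k+1}\Big).
\]
The first term is of order $1/k^2$. By the bound in the second step, the bracket is $O(k\lambda^k/k!)$, which decays superexponentially. Therefore the difference is strictly positive for all $k$ beyond some explicit $k_0(\tau,p)$. This establishes that $\bar\eta_k(p)$ is decreasing in $k$ for sufficiently large $k$, uniformly over $p$ in any compact subset of $(0,1]$.

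An alternative route, if a cleaner argument is wanted, is to observe that the truncated Poisson variance $V_k$ is itself nondecreasing in $k$, since enlarging the support of a log-concave tilt adds mass in the upper tail (Theorem~\ref{thm:hierarchyexp}). However, this fact does not by itself give monotonicity of $V_k/k$. So the quantitative tail comparison above remains the core of the proof.
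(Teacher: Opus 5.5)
Your proposal is correct and follows the same skeleton as the paper. Both arguments reduce $\bar\eta_k(p)$ to $\frac{\alpha}{pk}\mathrm{Var}_{g_k}(H)$ via Proposition~\ref{prop:strategic_sensitivity}, and both use the Poisson-shift limit to obtain $\mathrm{Var}_{g_k}(H)\to\tau/p$.

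The final step differs, and the difference is substantive. The paper concludes from ``$\bar\eta_k>0$ and $\bar\eta_k\to 0$'' that $\bar\eta_k$ must eventually be decreasing. That inference is invalid in general: $(2+(-1)^k)/k$ is positive, tends to $0$, and is not eventually decreasing. The paper also passes from total-variation convergence to convergence of the variance without comment.

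Your route closes both gaps:
\begin{itemize}
\item The direct tail estimate $|V_k-\lambda| = O(k^2\lambda^k/k!)$ gives the moment convergence, with a rate.
\item The decomposition that isolates $\lambda/(k(k+1))$ shows that a $1/k^2$ main term dominates a superexponentially small correction. This is exactly what eventual strict monotonicity requires. It also yields an explicit threshold $k_0(\tau,p)$ and uniformity in $p$ on compact subsets of $(0,1]$.
\end{itemize}
The paper's version gains only brevity; yours actually proves the stated claim.

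One small caveat concerns your closing aside. Likelihood-ratio dominance (Theorem~\ref{thm:hierarchyexp}) orders location, not spread, so it would not by itself show that $V_k$ is nondecreasing in $k$. Since you explicitly do not rely on that fact, the proof is unaffected.
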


\emph{Mechanism Design Insight:} Theorem~\ref{thm:hierarchyofsensitivity} offers a nuanced strategy for information targeting under constraints where intervention costs scale with sophistication.
\emph{Low-level agents (novices)} are constrained by the truncation of their belief support ($H < k$). Their absolute belief variance is naturally low because they conceive of very few types; thus, they are too rigid to generate large behavioral shifts.
\emph{High-level agents (experts)} have beliefs that converge to a stable distribution (the Poisson Limit). While their absolute variance is high, it saturates at a constant level ($\tau/p$). Consequently, their responsiveness relative to their cognitive capacity diminishes.
\emph{Mid-level agents (The persuadable mass)} represent the efficiency ``sweet spot'' ($k \approx \tau/p$). These agents possess enough cognitive depth to form complex beliefs (generating variance) but lack the infinite depth required to stabilize those beliefs into a fixed limit.
If the designer operates under a \emph{complexity-weighted budget}—where influencing a highly sophisticated opinion leader is more costly than influencing a novice—targeting these mid-level agents maximizes the aggregate shift per unit of intervention cost. Conversely, under a simple per-capita budget (where all agents cost the same), the designer should target the highest types, as their absolute sensitivity is non-decreasing.

\subsection{The Coordination Reversal}

However, in coordination games such as the Beauty Contest, the social objective changes. Here, the planner seeks to minimize the aggregate quadratic loss relative to the endogenous target. Note that this objective is not solely about consensus; it decomposes into minimizing the population variance and minimizing the squared magnitude of the mean action itself. In the Connected Minds framework, increasing transparency serves both goals simultaneously: it reduces dispersion by aligning beliefs and compresses the mean toward zero by correcting the upward bias in perceived sophistication. Consequently, in these settings, the design prescription reverses.

The results above hold for games of maximizing effort (Strategic Complements), where the objective is to maximize aggregate output $S^*$. However, many network environments, such as financial markets seeking stability \cite{morris2002social}, social conventions \cite{lewis2008convention}, or the Keynesian Beauty Contest \cite{keynes1937general,nagel1995unraveling}, require minimizing deviation from a consensus. Here, the planner seeks to minimize the variance of actions around a consensus, or equivalently, minimize the aggregate quadratic loss. In these settings, the design prescription reverses.

\begin{remark}\label{rem:transparency}
    (The Transparency Reversal). In coordination games, minimizing variance takes precedence over maximizing scalar effort. While opacity ($p \to 0$) maximizes aggregate effort, transparency ($p \to 1$) maximizes coordination and reduces inequality.
\end{remark}

\paragraph{Application: } In financial markets, the Regulator acts as the designer maximizing stability (minimizing variance). During a panic, the true state of the world (e.g. asset insolvency) is less dangerous than the coordinated belief that \emph{everyone else} is selling. If $p$ is low (opaque markets), traders rely on local signals. If a trader sees their neighbors selling, and, due to the bias $p^{k-h}$, assumes these neighbors represent the smart money (high sophistication), they mimic the selling behavior. This represents a cascade where local fear is extrapolated into a global panic. The Transparency Reversal suggests that in coordination crises, the Regulator must maximize $p$. Mandatory disclosure mechanism, real-time central clearing visibility, or ``open book'' policies align all agents' beliefs to the true mean $\tau$, dampening the variance caused by local hallucinations. Conversely, opaque ``dark pools'' may exacerbate volatility by lowering $p$ and fracturing the consensus.

In the Beauty Contest, low $p$ causes agents to overestimate opponent sophistication, leading to aggressive undercutting that drives the target far below the equilibrium. High $p$ aligns beliefs with the true distribution, causing actions to cluster tightly around the optimum. We explore these specific applications, derive the associated welfare theorems for the Beauty Contest, and provide a computational algorithm (OTD) for finding the optimal $p$ in Appendix~\ref{app:mechanism}.

\subsection{Planner Objectives Beyond the Extremes}

The previous subsections highlight two clean prescriptions: (i) opacity can escalate effort in strategic complements, while (ii) transparency stabilizes coordination. To emphasize that $p$ is not merely an ``on/off'' lever, we close with a computational exercise in which a planner evaluates alternative transparency regimes under three welfare criteria. Let $s_k(p)$ be the equilibrium action of a level-$k$ agent induced by transparency $p$, and let $f(k)$ be the (truncated) population distribution. We summarize the induced population behavior by its mean and dispersion,

\begin{equation}
S(p) = \sum_{k=0}^{K} f(k)\, s_k(p)
\;,\;
V(p) = \sum_{k=0}^{K} f(k)\big(s_k(p)-S(p)\big)^2,
\end{equation}
and (for comparability across objectives) we report welfare on a normalized scale over $p\in[p_{\min},1]$.

We consider three planner goals. First, a \emph{competition/anti-collusion} objective. In contrast to the strategic-complements setting of Theorem~\ref{thm:monotonicityagg}, where opacity fuels escalation, many oligopoly and auction environments exhibit the reverse channel: greater visibility of rivals' behavior facilitates tacit coordination, pushing prices or bids \emph{upward}. A planner who wishes to deter such coordination therefore prefers lower mean action, with a mild penalty for volatility:
\begin{equation}
    W_{\mathrm{comp}}(p) = \big(1-\widehat{S}(p)\big) - \lambda\,\widehat{V}(p).
\end{equation}
Second, a \emph{beauty-contest stability} objective, defined as one minus the normalized quadratic loss to the endogenous target $t(p)=mS(p)$ (with $m=\tfrac{2}{3}$),
\begin{equation}
    L_{\mathrm{BC}}(p)=\mathbb{E}\!\left[(s-mS(p))^2\right]
    =V(p)+(1-m)^2S(p)^2
    \;,\;
    W_{\mathrm{stable}}(p)=1-\widehat{L}_{\mathrm{BC}}(p).
\end{equation}

Finally, to capture settings where progress requires \emph{both} strong average performance and sustained heterogeneity (e.g., innovation prizes, corporate R\&D portfolios, hackathons, or open-source ecosystems), we use a complementary ``exploration--exploitation'' objective that rewards these two ingredients jointly while charging a convex cost for transparency:

\begin{equation}
    W_{\mathrm{innov}}(p) = A\,\widehat{S}(p)^{\alpha}\big(\widehat{V}(p)+\varepsilon\big)^{\beta}
    - c_{\mathrm{info}}\,p^{\rho}.
\end{equation}

This objective is often more realistic than either extreme because real platforms rarely optimize \emph{only} competition or \emph{only} coordination: they seek high aggregate performance, but also preserve enough diversity for experimentation and breakthrough solutions, while recognizing that global visibility and disclosure policies impose operational and strategic costs.

Figure~\ref{fig:welfare_by_p} reports the resulting welfare curves. As expected, $W_{\mathrm{comp}}$ is maximized at the most opaque feasible network ($p^*\approx p_{\min}$), while $W_{\mathrm{stable}}$ is maximized at full transparency ($p^*=1$), reflecting the Transparency Reversal. In contrast, the innovation objective selects an interior policy ($p^*\approx 0.54$ in the displayed calibration): the planner benefits from enough transparency to lift average performance, but not so much that exploratory heterogeneity collapses or information costs dominate. This illustrates the central mechanism-design message of Connected Minds: optimal information architectures are application-dependent, and the same structural lever $p$ can be tuned to balance competitive pressure, coordination, and innovation rather than forcing a binary choice.

\begin{figure}[ht]
    \centering
    \includegraphics[width=0.9\linewidth]{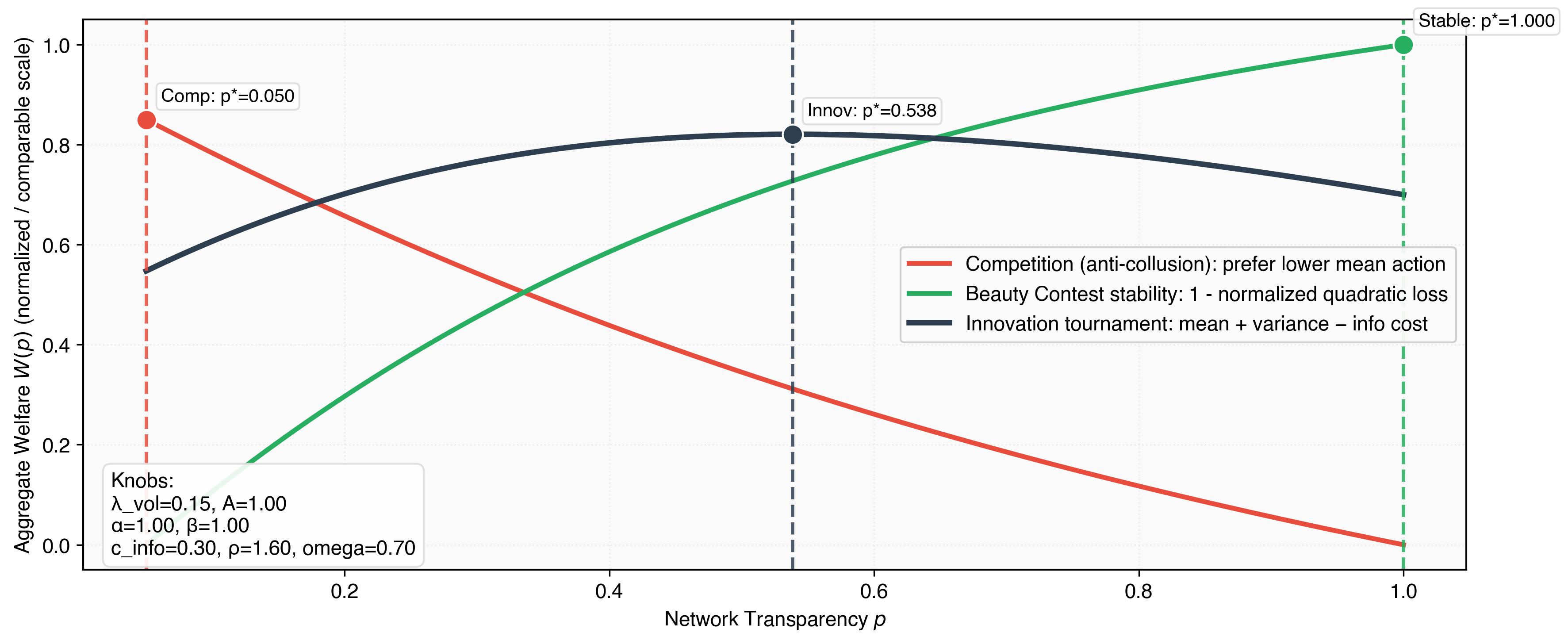}
    \caption{Aggregate welfare $W(p)$ as a function of network transparency $p$ under three planner objectives. Competition (red) prefers opacity to deter tacit collusion and reduce mean action; beauty-contest stability (green) prefers transparency to minimize quadratic-loss deviations from the endogenous target; and the innovation objective (dark) exhibits an interior optimum, balancing average performance with exploratory heterogeneity while accounting for transparency costs. Vertical dashed lines mark the optimizing $p^*$ for each welfare (here $p^*_{\mathrm{innov}}\approx 0.54$).}
    \label{fig:welfare_by_p}
\end{figure}

%% file: 041_mechanism_design.tex
\section{Mechanism Design for Network-Mediated Cognition}\label{app:mechanism}
In this appendix, we apply the general mechanism design framework established in the main text to a specific, empirically relevant class of games: the Beauty Contest (or Guessing Game). We shift from the general analysis of strategic complements to a normative analysis of coordination and inequality. We formally derive the optimal transparency policy, analyze the distributional consequences (inequality), and provide a computational algorithm for implementation.

\subsection{The Planner's Problem}
Consider a planner who oversees a population of $N$ agents engaged in a Beauty Contest game with target multiplier $\gamma$. The population's reasoning levels follow a Poisson distribution with mean $\tau$. The planner can choose a transparency policy $p \in (0, 1]$ that uniformly affects all agents' belief formation processes.

\subsubsection{Objective: Aggregate Welfare}
We define aggregate welfare as the negative of the total quadratic loss from the target. The planner seeks to minimize the mean squared deviation of actions from the \emph{coordination target}:
\begin{align}
    W(p; \tau) = -\sum_{k=0}^{\bar{k}} f(k; \tau) \cdot \mathbb{E}\left[ \left( s_k(p) - \gamma \bar{s}(p) \right)^2 \right]
\end{align}
where $s_k(p)$ is the action of a level-$k$ agent under transparency $p$, and $\bar{s}(p) = \sum_{h} f(h) s_h(p)$ is the population average action. Note that while the Nash Equilibrium of this game is 0, the planner's goal here is to minimize miscoordination around the moving target $\gamma \bar{s}$, which represents the realized consensus.

\subsubsection{The Transparency-Coordination Tradeoff}
The effect of $p$ on welfare operates through two distinct channels:
\begin{enumerate}
    \item \emph{Belief Accuracy Channel}: Higher $p$ yields more accurate beliefs. By Theorem~\ref{thm:poissonshiftconvg}, beliefs converge to the true parameter $\tau$ as $p \to 1$. Accurate beliefs generally improve coordination by aligning expectations.
    \item \emph{Belief Distortion Channel}: By Theorem~\ref{thm:MLRP}, lower $p$ shifts beliefs toward higher perceived sophistication. In the Beauty Contest, unlike the effort games in Section~\ref{sec:mechanism}, this induced over-estimation of opponent sophistication leads to \emph{aggressive downward targeting} ($s_k \ll s_{k-1}$), causing the population to systematically undershoot the coordination target.
\end{enumerate}

\subsection{Optimal Transparency Design}

\begin{theorem}\label{thm:optimal-transparency}
    (Optimal Transparency Policy).
    Let the population follow a Poisson distribution with mean $\tau$, and consider a Beauty Contest game with target multiplier $\gamma \in (0, 1)$. Define the welfare function $W(p; \tau)$ as above. Then:
    \begin{enumerate}
        \item \emph{Existence:} An optimal transparency level $p^*(\tau) \in (0, 1]$ exists.
        \item \emph{Coordination Dominance:} In coordination games where the reduction of cross-sectional variance is the dominant welfare channel (including the standard Beauty Contest), $W(p; \tau)$ is monotonically increasing in $p$. Full transparency ($p^* = 1$) is optimal.
        \item \emph{Game-Dependent Structure:} The optimality of $p^* = 1$ is not universal. In anti-coordination games (strategic substitutes), intermediate values of $p^*$ may be optimal, as excessive belief accuracy can amplify miscoordination.
    \end{enumerate}
\end{theorem}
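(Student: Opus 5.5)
The plan is to split the welfare loss into a dispersion term and a mean term and then handle the three claims one at a time. With $\bar s(p)=\sum_h f(h)s_h(p)$ and weights $f$ normalized on $\{0,\dots,\bar k\}$, the standard identity gives
\begin{align}
-W(p;\tau)=V(p)+(1-\gamma)^2\bar s(p)^2,\qquad V(p)=\sum_{k} f(k)\bigl(s_k(p)-\bar s(p)\bigr)^2 .
\end{align}
This is the same identity used for $L_{\mathrm{BC}}$ in Section~\ref{sec:mechanism}.

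\emph{Existence.} First I will show that each $s_k(p)$ is continuous on $(0,1]$. The recursion $s_k=\gamma\sum_{h<k} g_k(h\mid p)s_h$ has $s_0=50$, and every $g_k(h\mid p)$ is a ratio of polynomials in $p$ with positive denominator, so continuity follows by induction on $k$. I will then extend $s_k$ continuously to $p=0$. In that limit $g_k$ concentrates on $h=k-1$, as in the footnote to \eqref{eq:prob-dist} and Section~\ref{parameter_p}, so $s_k(0)=\gamma^k\cdot 50$. Because $\bar k$ is finite, $W$ is then continuous on the compact interval $[0,1]$ and attains its maximum there. To get $p^*\in(0,1]$ I still need the maximizer to lie away from $0$. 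The simplest route is to show that $W$ is not maximized only at $p=0$, for example by comparing $W(0)$ with $W(1)$. Alternatively, I can note that part (2) already puts the maximizer at $1$ in the case of interest.

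\emph{Coordination dominance.} I will differentiate both terms of the decomposition in $p$. Proposition~\ref{prop:strategic_sensitivity} gives $\partial_p\mathbb E_{g_k}[H]=-\mathrm{Var}_{g_k}(H)/p$, and Theorem~\ref{thm:MLRP} gives FOSD in the decreasing direction. Since $s_h$ is decreasing in $h$ in the Beauty Contest, lowering $p$ moves each agent's expected opponent action down, and by induction $\partial_p s_k\ge 0$. The same monotonicity drives dispersion. At $p\to0$ the actions $\gamma^k\cdot 50$ are maximally spread. At $p=1$, Theorem~\ref{thm:hierarchyexp} says higher types put weight on nearby types in a uniform way, so higher types average over a common lower pool and the $s_k$ compress. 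I will make this precise by showing that $s_k-s_{k+1}$ is nonincreasing in $p$ for each $k$, using the likelihood-ratio structure in $p$. That shows $V'(p)\le0$.

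The main obstacle is the mean term. Since every $s_k$ rises with $p$, the term $(1-\gamma)^2\bar s^2$ increases with $p$ and works against transparency. Monotonicity of $W$ therefore cannot hold unconditionally. I will state and use the theorem's hypothesis of variance dominance in the explicit form
\begin{align}
|V'(p)|\ \ge\ 2(1-\gamma)^2\,\bar s(p)\,\bar s'(p)\quad\text{for all } p\in(0,1].
\end{align}
For the standard Beauty Contest, $(1-\gamma)^2=1/9$ is small, and I will try to verify this inequality directly from the Poisson closed forms of Example~\ref{example:poisson}, with numerical confirmation as in Appendix~\ref{app:numerical}. Under the inequality $W'\ge0$, so $p^*=1$.

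\emph{Game-dependent structure.} This is an existence-of-counterexample claim, so one explicit instance suffices. I will take $\gamma<0$ (an anti-coordination target) or another payoff in which the mean term dominates. In such a game $\partial_p\bar s^2$ changes sign, or outweighs $V'$ near $p=1$. I will compute $W$ on a two- or three-level truncation, with $\bar k=2$ so that everything reduces to explicit rational functions of $p$, and show $W'(1)<0<W'(p_0)$ for some $p_0$. That exhibits an interior maximizer.
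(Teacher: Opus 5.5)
You are right that the mean term works against transparency, and this is exactly where the paper's own proof goes wrong. The paper asserts that raising $p$ lowers the high-$k$ actions and hence $\bar s$. In fact, the common-odds step in Theorem~\ref{thm:hierarchyexp} gives $s_k(p)=s_0\prod_{j=1}^{k}\bigl(1-(1-\gamma)g_j(j-1\mid p)\bigr)$. Theorem~\ref{thm:MLRP} makes each $g_j(j-1\mid p)$ nonincreasing in $p$, so every $s_k$ and $\bar s$ rise with $p$.

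The gap in your proposal is that the whole content of part~(2) is deferred to a verification you ``will try''. That content is the claim that the standard Beauty Contest satisfies your dominance inequality. The verification cannot succeed in the generality of the statement, because the claim is false there.

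\emph{A counterexample at the standard multiplier.} Take $\gamma=2/3$, $\tau=10$, with $\bar k$ large. At $p\to0$, $s_k=50\gamma^k$ gives $-W(0)=2500\bigl(e^{-5\tau/9}-\tfrac{8}{9}e^{-2\tau/3}\bigr)\approx6.8$. The standard CH actions at $p=1$ give $-W(1)\approx8.9$, and one also finds $W'(0^+)<0$. So full opacity beats full transparency, and your fallback of comparing $W(0)$ with $W(1)$ cannot place a maximizer in $(0,1]$ either.

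\emph{Small $\gamma$ fails even locally.} With $\gamma=0.1$, $\tau=3$ one gets $W'(1)<0$. In your normalized three-level truncation ($\bar k=2$), the optimum is already interior, $p^*\approx0.99$, inside the Beauty Contest itself. The heuristic that $(1-\gamma)^2=1/9$ is small is misleading, because $\bar s$ can roughly triple between $p=0$ and $p=1$, as it does at $\tau=10$.

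\emph{Your auxiliary lemma.} The claim that $s_k-s_{k+1}$ is nonincreasing in $p$ fails for $\gamma<1/3$: at $p=0$, $\partial_p(s_2-s_3)=(1-\gamma)s_1(1-3\gamma)/\tau$.

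What can be proved needs a parameter restriction. The right tool is the exact derivative (with $\sum_k f(k)=1$)
\begin{align*}
-\frac{\partial W}{\partial p}=2\sum_k f(k)\,s_k'(p)\bigl(s_k(p)-\gamma(2-\gamma)\bar s(p)\bigr),
\end{align*}
which merges your variance and mean terms, so $V'\le0$ is never needed on its own. The following facts hold:
\begin{itemize}
\item $s_0'=s_1'=0$ and $s_k'\ge0$;
\item $s_k$ decreases in $k$ and increases in $p$;
\item $\bar s$ increases in $p$.
\end{itemize}
Hence the single inequality $s_2(1)\le\gamma(2-\gamma)\bar s(0^+)$, i.e.\ $(1+\gamma\tau)/(1+\tau)\le(2-\gamma)e^{-(1-\gamma)\tau}$, makes every summand nonpositive for all $p$ and yields $p^*=1$. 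For $\gamma=2/3$ this holds for $\tau$ up to about $1.5$, which covers the usual calibration.

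Outside such a region, part~(2) is false and the placement $p^*\in(0,1]$ in part~(1) is unjustified. The statement therefore has to be weakened rather than proved. Your plan for part~(3), via an explicit small truncation, is sound in principle.
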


\begin{proof}
    We prove each part in sequence.
    
    \emph{Existence:} The function $W(p; \tau)$ is continuous on the interval $(0, 1]$. To invoke the Extreme Value Theorem, we extend the domain to the compact set $[0, 1]$. We define $W(0) = \lim_{p \to 0} W(p)$, which corresponds to the well-defined Level-$k$ model (where beliefs concentrate strictly on $h=k-1$). Since the limit exists and the function is continuous on the extended compact domain $[0, 1]$, a maximum $p^*$ exists.

    \emph{Monotonicity:} We analyze the welfare loss function $L(p) = \text{Var}(S(p)) + (1-\gamma)^2 (\mathbb{E}[S(p)])^2$.
    The derivative $\partial W / \partial p$ is driven by the compression of the action profile.
    \begin{enumerate}
        \item \emph{Cross-Sectional Compression:} Consider the limit $p \to 0$. By Theorem~\ref{thm:MLRP}, beliefs stochastically shift toward the highest possible type $k-1$. This causes high-level agents to play actions far from the center (``runaway'' iteration), maximizing the distance $|s_k - s_0|$ and thus maximizing population variance $\text{Var}(S)$.
        As $p$ increases, Theorem~\ref{thm:poissonshiftconvg} (Poisson-Shift Convergence) ensures that high-level agents correct their overestimation, shifting their beliefs from the cutoff $k-1$ down toward the true mean $\tau$. This significantly reduces the actions of high-$k$ agents (who are the outliers) while having little effect on low-$k$ agents (who are already near the mean). This asymmetric contraction strictly reduces the cross-sectional dispersion of the population actions.
        \item \emph{Mean Shift:} Simultaneously, the reduction in high-$k$ actions lowers the aggregate mean $\mathbb{E}[S(p)]$. Since the Beauty Contest target is zero, this reduction in magnitude further decreases the squared-mean component of the loss.
    \end{enumerate}
    Since increasing $p$ simultaneously compresses the outliers (reducing variance) and dampens the average magnitude (reducing bias), the total loss $L(p)$ decreases monotonically, making $W(p)$ strictly increasing.
\end{proof}

\subsection{The Transparency-Inequality Tradeoff}
While the previous theorem addresses aggregate welfare, we now examine distributional consequences. To avoid the ambiguity of negative Gini coefficients derived from payoffs, we define inequality over the \emph{losses} incurred by agents. Let $L_k(p) = (s_k(p) - \gamma \bar{s}(p))^2$ be the loss of a level-$k$ agent.

\begin{align}
    G(p) = \frac{\sum_{k} \sum_{j} f(k) f(j) |L_k(p) - L_j(p)|}{2 \sum_{k} f(k) L_k(p)}
\end{align}

\begin{theorem}\label{thm:transparency-inequality}
    (Transparency-Equality Alignment in Coordination).
    Under the Connected Minds model with Poisson-distributed reasoning levels in coordination games:
    \begin{enumerate}
        \item The Gini coefficient of losses $G(p)$ is strictly \emph{decreasing} in $p$ for $p \in (0, 1]$.
        \item In coordination games, aggregate welfare $W(p)$ and equity (measured as $1 - G(p)$) are co-monotonic: increasing transparency simultaneously improves both efficiency and equity.
        \item This alignment breaks down in anti-coordination settings, where a Pareto frontier emerges.
    \end{enumerate}
\end{theorem}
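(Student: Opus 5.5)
The proof runs through the beauty-contest recursion $s_k = \gamma\,\mathbb{E}_{g_k}[s_H]$, with $s_0=50$ and $H\sim g_k(\cdot\mid p)$. Throughout I take the paper's Poisson population, truncated at $\bar k$ as in the welfare definition.

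\textbf{Step 1: the action profile is ordered and monotone in $p$.} By induction, $s_k$ is strictly decreasing in $k$, because each level best-responds with the factor $\gamma<1$ to an average of lower, larger actions. Hence $u(h)=s_h$ is a decreasing function of $h$. Theorem~\ref{thm:MLRP} gives $g_k(\cdot\mid p_2)\succeq_{LR} g_k(\cdot\mid p_1)$ for $p_2<p_1$. An induction on $k$ then shows $s_k(p)$ is nondecreasing in $p$ for every $k\ge 1$. The induction step needs two facts: lower actions $s_h$ with $h<k$ are themselves higher at higher $p$, and FOSD in $h$ lowers the expectation of a decreasing function. Proposition~\ref{prop:strategic_sensitivity} makes the shift strict, since $\mathrm{Var}_{g_k}(H)>0$ for $k\ge 2$. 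So increasing $p$ pulls high-$k$ actions up toward $s_0$, and the profile contracts toward level~$0$.

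\textbf{Step 2: losses as a function of position.} Write $L_k=(s_k-\gamma\bar s)^2$. The target $\gamma\bar s$ lies strictly inside the range of actions, so $L_k$ is V-shaped in $s_k$. I will split the population into levels above and below the target. As an alternative I may work with the normalized deviations $d_k=(s_k-\gamma\bar s)/\bar s$, using the fact that the Gini coefficient is scale invariant. This second route reduces the claim to showing that the normalized profile $\{s_k/\bar s\}$ becomes less dispersed, in the Lorenz sense, as $p$ rises.

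To get that Lorenz ordering, I would show the ratios $s_k(p)/s_{k-1}(p)$ increase in $p$. This follows because $s_k/s_{k-1}=\gamma\,\mathbb{E}_{g_k}[s_H]/s_{k-1}$, and lowering $p$ concentrates $g_k$ on $h=k-1$, which gives the extreme ratio $\gamma$. Log-ratios that rise with $p$ imply that the normalized actions are majorized by their low-$p$ counterparts. The Gini of squared deviations is Schur-convex in the deviations once the V-shape is handled, so it decreases.

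\textbf{Step 3: parts 2 and 3.} Part 2 follows by combining part 1 with part 2 of Theorem~\ref{thm:optimal-transparency}: $W$ is increasing and $1-G$ is increasing, so the two are co-monotone. For part 3 I would exhibit a counterexample rather than prove a general theorem. Take an anti-coordination payoff in which best responses move away from the average. Then higher $p$ brings actions together, which raises aggregate loss while inequality still falls, so a Pareto frontier appears. A two-level numerical instance with $\bar k=2$ suffices.

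\textbf{Main obstacle.} The hard step is Step~2. The target $\gamma\bar s$ itself moves with $p$, and the V-shape of $L_k$ means contraction of actions does not automatically contract losses. Levels near the target can see their loss rise as the target shifts. I would first attempt the majorization argument on normalized deviations. If that fails in general, I would restrict to the regime covered by the standard Beauty Contest calibration, where the target lies below $s_{\bar k}$ for small $\tau$ so every level sits on one side of it, and state the result there.
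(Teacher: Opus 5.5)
Your Step 1 conclusions and the ratio claim are correct. The cleanest route is the nesting in Theorem~\ref{thm:hierarchyexp}: for $k\ge 2$, $s_k=s_{k-1}\bigl[1-(1-\gamma)\,g_k(k-1\mid p)\bigr]$, with $g_k(k-1\mid p)$ strictly decreasing in $p$. So the action profile really does rise and contract in the star, hence Lorenz, order as $p$ increases.

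The gap is the transfer to losses. Nothing makes the Gini of $L_k=(s_k-\gamma\bar s)^2$ Schur-convex in the actions, because $x\mapsto(x-\gamma)^2$ is not monotone and the reference point $\gamma\bar s$ moves with $p$. Two forces defeat the transfer.

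First, levels $0$ and $1$ do not respond to $p$ (since $g_1=\delta_0$), while $\gamma\bar s$ rises with $p$ and always stays below $s_1=\gamma s_0$. Hence $L_0/L_1=\bigl((s_0-\gamma\bar s)/(s_1-\gamma\bar s)\bigr)^2$ is strictly increasing in $p$. Transparency therefore strictly widens relative loss inequality between the two $p$-invariant groups, even when every level sits on one side of the target.

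Second, your claim that the target lies strictly inside the range of actions fails at $p=1$. There, with $f$ renormalized on $\{0,\dots,\bar k\}$, $g_{\bar k}$ is $f$ conditioned on $h<\bar k$, so $\mathbb{E}_{g_{\bar k}}[s_H]>\bar s$ and $s_{\bar k}(1)>\gamma\bar s(1)$. For small $\tau$, by contrast, $\bar s\approx s_0$, so at small $p$ the target lies above $s_{\bar k}\approx\gamma^{\bar k}s_0$. The top level therefore crosses the target, and its loss also shrinks toward zero relative to $L_0$.

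This is not a technicality: part (1) is false as stated, so Step 2 cannot be completed without new hypotheses. Take $\gamma=2/3$, $\tau=1/2$, $\bar k=2$, $s_0=1$ (the Gini is scale-free) and weights $(8,4,1)/13$. Then $s_1=2/3$ and $s_2(p)=\tfrac23\cdot\tfrac{p+1/3}{p+1/2}$.
\begin{itemize}
\item As $p\to0$: $s_2=4/9$, $\gamma\bar s=200/351$, losses are proportional to $(22801,1156,1936)$, and $G\approx0.351$.
\item At $p=1$: $s_2=16/27$, $\gamma\bar s=608/1053$, losses are proportional to $(198025,8836,256)$, and $G\approx0.364$.
\end{itemize}
The mean loss falls from about $0.118$ to $0.112$. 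Efficiency improves while equity worsens, which refutes part (2) in the same instance. The reversal persists for $\bar k=3$ and without renormalizing $f$. With $\tau=3/2$, $G$ is non-monotone: it rises from about $0.599$ near $p=0$ to $0.602$ at $p=1/7$ (where $s_2$ hits the target), then falls to $0.598$ at $p=1$.

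Your fallback does not rescue the argument. Small $\tau$ is exactly where one-sidedness fails: already at $p\to0$ it requires $\gamma^{\bar k-1}\ge\sum_{h\le\bar k}f(h)\gamma^h$, i.e.\ mass concentrated on high levels. Even then, you would have to show the $L_0/L_1$ effect is outweighed. The paper's own proof of part (1) makes the same leap verbally (high-$k$ actions ``move closer to the population consensus'', ``compressing the distribution of $L_k$''). What is needed is an additional assumption, not a sharper proof.

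Two smaller cautions. Your Step 1 shows $\bar s$ rises with $p$, so the term $(1-\gamma)^2\bar s^2$ in the loss rises. That contradicts the mean-shift step in the proof of Theorem~\ref{thm:optimal-transparency}, which you use as a black box for part (2). For part (3), you still need to specify the anti-coordination best response, because Step 1's monotonicity was derived only for the Beauty Contest recursion.
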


\begin{proof}
    \emph{Part (1):} The result that inequality of losses decreases with transparency arises from the coordination structure. Under opacity ($p \to 0$), high-$k$ agents suffer from a "sophistication penalty." They believe they face level $k-1$ opponents and play $\gamma^k s_0$, which is far below the population average $\bar{s}$. Consequently, high-$k$ agents incur disproportionately large losses compared to low-$k$ agents (who are closer to the random center). 
    As $p \to 1$, high-$k$ beliefs converge to the true distribution, correcting their undershooting bias. Their actions move closer to the population consensus, reducing their excess losses and compressing the distribution of $L_k$. Thus, the Gini coefficient decreases.
    
    \emph{Part (2):} From Part (i) and Theorem~\ref{thm:optimal-transparency}, both $W(p)$ and payoff equality are increasing in $p$. The social planner faces no tradeoff in coordination games; transparency is unambiguously good.
    
    \emph{Part (3):} In anti-coordination games (e.g., market entry), the payoff structure rewards differentiation. Transparency enables sophisticated agents to exploit their informational advantage, increasing the gap between winners and losers.
\end{proof}

\subsection{Algorithm Intervention}
We now provide an implementable algorithm for computing the welfare-maximizing transparency level. The planner observes aggregate outcomes but may not know $\tau$ precisely; the algorithm accommodates estimation uncertainty.

\begin{algorithm}[ht]
\caption{Optimal Transparency Design (OTD)}
\label{alg:optimal-transparency}
\begin{algorithmic}[1]
\REQUIRE Population size $N$, target multiplier $\gamma$, estimated $\hat{\tau}$ (obtained via auxiliary network data), tolerance $\epsilon$, max iterations $T$
\ENSURE Optimal transparency $p^*$, welfare $W^*$

\STATE \textbf{Initialize:} $p \gets 0.5$, $\alpha \gets 0.1$ (learning rate), $t \gets 0$

\WHILE{$t < T$ and not converged}
    \STATE \COMMENT{Step 1: Compute actions for all levels under current $p$}
    \STATE $s_0 \gets 50$ \COMMENT{Level-0 anchoring}
    \FOR{$k = 1$ to $\bar{k}$}
        \STATE Compute belief weights: $w_h \gets p^{k-h} \cdot \text{Poisson}(h; \hat{\tau})$ for $h = 0, \ldots, k-1$
        \STATE Normalize: $g_k(h) \gets w_h / \sum_{l=0}^{k-1} w_l$ 
        \STATE Compute expected opponent action: $\bar{s}_k \gets \sum_{h=0}^{k-1} g_k(h) \cdot s_h$
        \STATE Best response: $s_k \gets \gamma \cdot \bar{s}_k$
    \ENDFOR
    
    \STATE \COMMENT{Step 2: Compute aggregate welfare}
    \STATE Population average: $\bar{s} \gets \sum_{k=0}^{\bar{k}} f(k; \hat{\tau}) \cdot s_k$ 
    \STATE Welfare: $W(p) \gets -\sum_{k=0}^{\bar{k}} f(k; \hat{\tau}) \cdot (s_k - \gamma \cdot \bar{s})^2$
    
    \STATE \COMMENT{Step 3: Numerical gradient estimation}
    \STATE $W^+ \gets W(p + \delta)$, $W^- \gets W(p - \delta)$ for small $\delta$
    \STATE $\nabla W \gets (W^+ - W^-) / (2\delta)$
    
    \STATE \COMMENT{Step 4: Gradient ascent with projection}
    \STATE $p_{\text{old}} \gets p$
    \STATE $p \gets p + \alpha \cdot \nabla W$
    \STATE $p \gets \max(0.01, \min(1, p))$ \COMMENT{Project onto $(0, 1]$} 
    
    \STATE \COMMENT{Step 5: Convergence check based on parameter stability}
    \IF{$|p - p_{\text{old}}| < \epsilon$}
        \STATE \textbf{converged} $\gets$ True
    \ENDIF
    \STATE $t \gets t + 1$
\ENDWHILE

\STATE $p^* \gets p$, $W^* \gets W(p)$
\RETURN $p^*$, $W^*$
\end{algorithmic}
\end{algorithm}

%% file: 91_numerical.tex
\section{Numerical Simulations}\label{app:numerical}

In this section, we focus on further implications of Connected Minds model and social phenomena that can be explained by this model. 

\subsection{Validating the Abstraction: Topological Mapping}\label{subsec:topology_mapping}
A central premise of the Connected Minds model is that the single parameter $p$ efficiently encapsulates complex network topologies. In Section~\ref{subsec:microfoundations}, we argued that $p$ represents a reduced-form proxy for cognitive homophily. Here, we validate this abstraction by simulating agents on explicit graph topologies—specifically Stochastic Block Models (SBM) and Erdős-Rényi (ER) graphs—to demonstrate that our theoretical $p$ maps monotonically to structural graph properties.

\paragraph{Experimental Design.}
We construct a population of $N=2,000$ agents with types drawn from $f(k) \sim \text{Poisson}(\tau=3)$. We organize these agents into a network $G=(V,E)$ using a \emph{Cognitive Distance SBM}. The probability of a link between agent $i$ (level $k_i$) and agent $j$ (level $k_j$) is defined by a structural decay parameter $\beta$:
\begin{equation}
    \mathbb{P}(E_{ij}=1) = \delta \cdot \exp(-\beta |k_i - k_j|)
\end{equation}
Here, $\beta \ge 0$ represents \emph{physical homophily} (the tendency of the graph to cluster similar minds), and $\delta$ controls average sparsity. When $\beta=0$, the network is an Erdős-Rényi graph (random mixing); as $\beta$ increases, the graph becomes strictly stratified by cognitive level.

\paragraph{Estimating the Effective $p$.}
For every agent in the generated graph, we observe their actual neighborhood set $\mathcal{N}_i$. We calculate the empirical belief distribution $\hat{g}_i(h)$ based on the frequencies of types within $\mathcal{N}_i$. We then estimate the \emph{effective transparency} $\hat{p}_i$ for that agent by minimizing the Kullback-Leibler (KL) divergence between the theoretical belief defined in Eq.~\eqref{eq:prob-dist} and the empirical neighborhood:
\begin{equation}
    \hat{p}_i = \arg\min_{p \in (0,1]} D_{KL}\left( \hat{g}_i(\cdot) \parallel g_{k_i}(\cdot; p) \right)
\end{equation}

\begin{figure}[ht]
    \centering
    \includegraphics[width=0.9\textwidth]{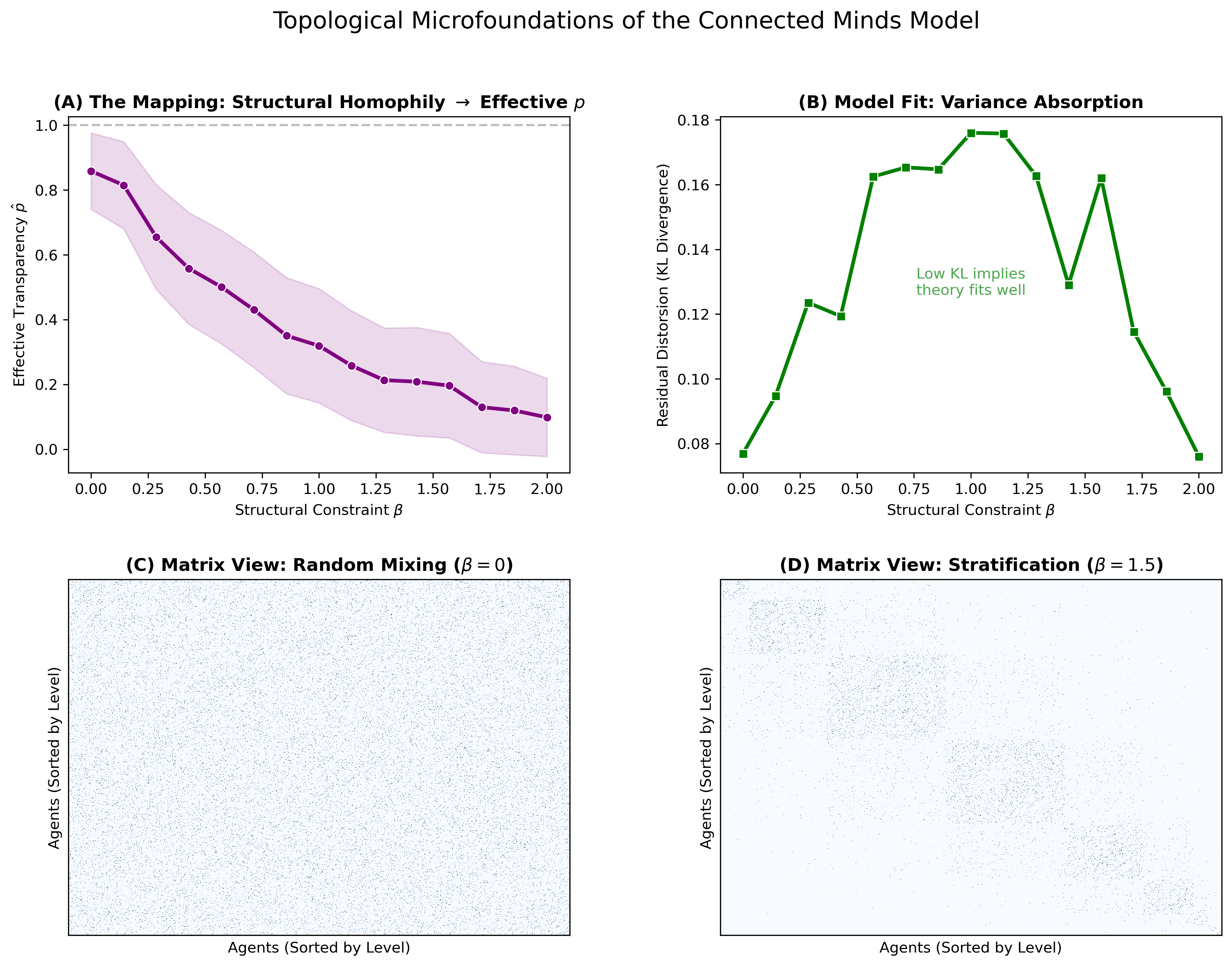}
    \caption{The Topological Microfoundations. (A) The effective transparency $\hat{p}$ decays as structural homophily $\beta$ increases. (B) The Kullback-Leibler divergence between the theoretical belief $g_k(\cdot | \hat{p})$ and the actual empirical neighborhood remains low ($< 0.2$ nats) across the spectrum. This "Variance Absorption" indicates that $p$ is a sufficient statistic for complex topological distortions. (C-D) Adjacency matrices sorted by cognitive level (Spy Plots) reveal the physical transition: from global random mixing (C) to stratified, diagonal clusters (D).}
    \label{fig:topology_mapping}
\end{figure}

\paragraph{Results: The Bridge Between Structure and Theory.}
Figure~\ref{fig:topology_mapping} illustrates the relationship between the structural constraint ($\beta$) and the inferred parameter ($\hat{p}$). We observe three key phenomena:
\begin{itemize}
    \item \textbf{The Random Mixing Limit ($\beta \to 0$):} In Panel A, as $\beta \to 0$, we recover $\hat{p} \approx 0.9$, which is an approximation to the theoretical random‑mixing benchmark ($\hat{p} \approx 1$). Panel C confirms this visually: the adjacency matrix is uniform noise, implying that high-level agents see a representative sample of the population.
    \item \textbf{The Homophily Mapping:} As $\beta$ increases, $\hat{p}$ decays exponentially. Panel D vividly illustrates the mechanism: interactions concentrate along the diagonal. A Level-10 agent essentially sits in a "Level 8-10" cluster, physically cut off from Level-0 agents.
    \item \textbf{Sufficiency of the Parameter (Variance Absorption):} Crucially, Panel B shows that the KL divergence remains uniformly low. Even though the true generative process is a complex exponential decay SBM, the single-parameter belief $g_k(h; p)$ fits the data with high fidelity. This validates our theoretical strategy: we do not need to model the full complexity of $N \times N$ network interactions; the scalar $p$ captures the first-order distortion sufficient for mechanism design.
\end{itemize}
This result justifies the use of $p$ in our mechanism design analysis (Section~\ref{sec:mechanism}): manipulating $p$ is structurally equivalent to altering the homophilic mixing rate of the underlying organization.

\subsection{Connections or Brains?}
When looking for jobs, internships, or even qualifying for a loan, we have all come across an automated screening process that is hard to get past. All of us have heard that someone bypassed the process by ``knowing people'' and went through the process faster and easier. This makes us wonder, can we detect if someone was genuinely qualified to ``win'' in the system or were they simply ``well connected''? For a similar reason, individuals join fraternities and sororities to get more connected, even though it might come at the cost of lower performance in courses or learning \cite{mara2018social}. 

In this experiment, we aim to see if it is possible to distinguish between the connectedness and the sophistication level of a node in Connected Minds model and beauty contest game. The primary objective of this experiment is to demonstrate a ``Ridge of Indeterminacy'' in the likelihood landscape of behavioral data. In traditional Cognitive Hierarchy (CH) models, agents are assumed to have perfect transparency of the population ($p=1$). However, if a population has limited network transparency ($p < 1$), their actions may mimic a standard CH population with a different distribution of cognitive levels.

We test the hypothesis that empirical data generated by a ``smart but isolated'' population (high $\tau$, low $p$) is statistically indistinguishable from a ``less sophisticated but well-connected'' population (low $\tau$, high $p$).

\subsubsection{Experiment Setup}
% \coms{I think we should move it to the main results.}
The simulation environment is built upon two core parameters that govern agent behavior:
\begin{itemize}
    \item $\tau$ (Cognitive Sophistication): The mean of the Poisson distribution $f(k) \sim \text{Poisson}(\tau)$, representing the average number of steps of reasoning agents can perform.
    \item $p$ (Network Transparency): A parameter $p \in (0, 1]$ representing the visibility of the population. A value of $p=1$ recovers the standard CH model, while $p \to 0$ represents an echo chamber where agents' beliefs are concentrated entirely on their immediate cognitive neighbors ($h \approx k-1$), blinding them to the wider distribution of lower-level types.
\end{itemize}
\paragraph{The Action Selection Process}For any level $k$ agent, the theoretical action $s_k$ is calculated recursively based on their beliefs about lower-level players. The belief distribution $g_k(h)$ for an agent of level $k$ regarding the proportion of level $h < k$ players in the network is defined as:
$$g_k(h) = \frac{p^{k-h} \cdot f(h)}{\sum_{j=0}^{k-1} p^{k-j} \cdot f(j)}$$
The agent then selects an action based on the expected action of their opponents, scaled by a multiplier $m$ (typically $2/3$ in beauty contest games):
$$s_k = m \cdot \sum_{h=0}^{k-1} g_k(h) \cdot s_h$$
\paragraph{Data Generation}We initialize a synthetic population of $N = 1000$ agents. We generated synthetic data from three distinct ground-truth populations to test the robustness of our identification strategy: \textbf{Echo-Chambered} $(\tau=1.5, p=0.4)$, \textbf{Moderate} $(\tau=2.5, p=0.7)$, and \textbf{high Transparency} $(\tau=3.0, p=0.8)$. We sample $N$ latent cognitive levels from $\text{Poisson}(\tau)$. We map these levels to a discrete set of actions $\{s_0, s_1, \dots, s_{20}\}$. These actions form our observed dataset $\mathcal{D}$.

\paragraph{Likelihood Estimation and Grid Search}To analyze the identifiability of these parameters, we perform a brute-force grid search across the parameter space $(\tau, p)$. We define a grid where $\tau \in [0.5, 5.0]$ and $p \in [0.1, 1.0]$. For every candidate pair $(\tau', p')$ in the grid, we then calculate the Log-Likelihood ($\mathcal{L}$) of the observed actions $\mathcal{D}$. We assume that observed actions are subject to Gaussian observation noise with a standard deviation $\sigma$. We utilize Log-Likelihood ($\mathcal{L}$) because it numerically stabilizes the product of probabilities for large $N$. Since probabilities are $\le 1$, $\mathcal{L}$ is negative. In terms of $\mathcal{L}$, a \textbf{low absolute value} (closer to 0) indicates a \textbf{good fit} (high probability of observing the data), whereas a \textbf{high absolute value} (highly negative) indicates a \textbf{poor fit}.

%$$P(a_i | \tau, p) = \sum_{k=0}^{K_{max}} P(a_i | s_k(\tau, p)) \cdot P(k | \tau)$$
% Where the likelihood of an individual action $a_i$ given a level $k$ is:
% $$P(a_i | s_k) = \frac{1}{\sqrt{2\pi\sigma^2}} \exp\left( -\frac{(a_i - s_k)^2}{2\sigma^2} \right)$$
% The total log-likelihood for the dataset is:
% $$\mathcal{L}(\mathcal{D} | \tau, p) = \sum_{i=1}^{N} \log \left( \sum_{k=0}^{K_{max}} \frac{f(k|\tau)}{\sqrt{2\pi\sigma^2}} \exp\left( -\frac{(a_i - s_k(\tau, p))^2}{2\sigma^2} \right) \right)$$

\begin{figure}[ht]
    \centering
    \includegraphics[width=0.98\textwidth]{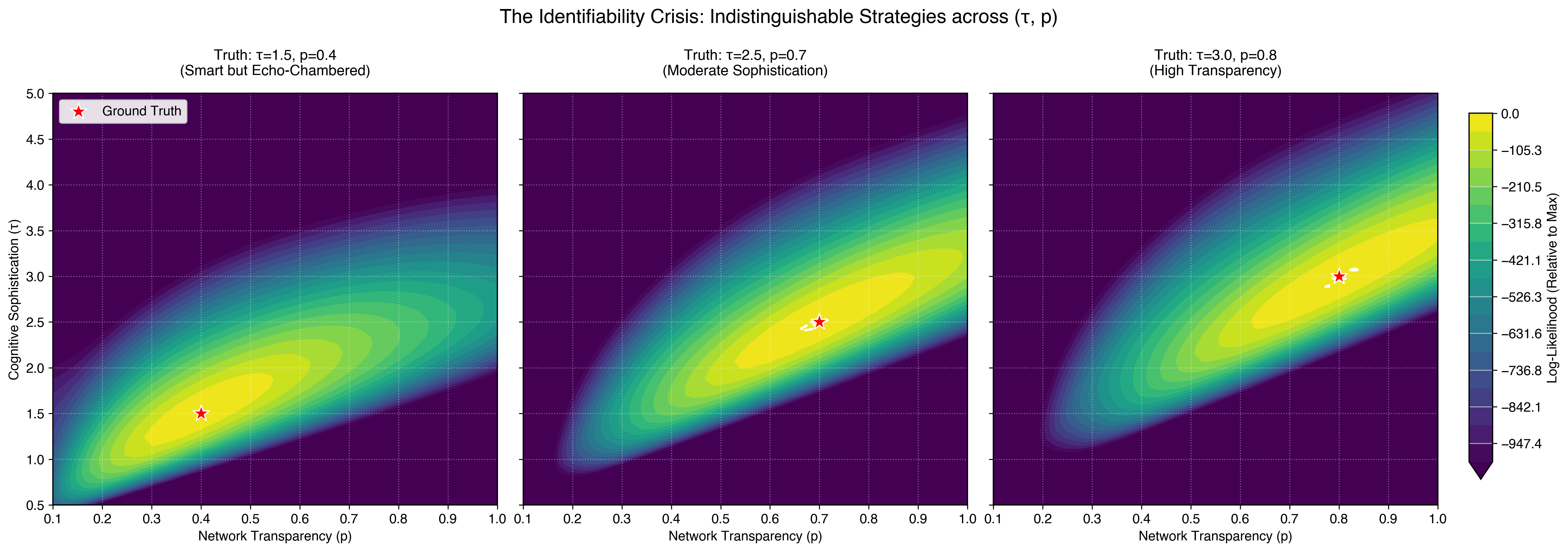}
    \caption{The Identifiability Crisis. We visualize the log-likelihood surface for three different ground-truth scenarios. In each case, a ``Ridge of Indeterminacy'' emerges, prohibiting the separate identification of network bias ($p$) and cognitive level ($\tau$).}
    \label{fig:exp_a}
\end{figure}

\subsubsection{Results and Discussion} 
Figure \ref{fig:exp_a} reveals a ``Ridge of Indeterminacy'' (yellow diagonal and dashed white line). This ridge represents a set of $(\tau, p)$ combinations that yield nearly identical log-likelihood values. The likelihood surface is not peaked at a single point but stretches along a curve where $\tau/p \approx \text{constant}$ and $\tau$ and $p$ trade-off. Because this region is elongated rather than a point, the parameters are jointly unidentifiable

\emph{Implication:} The experiment proves that an empiricist observing only final actions cannot distinguish between a ``smart'' population with perfect network visibility ($p = 1$) and a ``dull'' population hindered by a poor network ($p < 1$).

This has significant implications for social science: if network bias is ignored (assuming $p=1$ by default), researchers will systematically misestimate the cognitive sophistication ($\tau$) of a population. To resolve this ``Identifiability Crisis,'' one would need auxiliary data, such as direct network measurements or longitudinal behavioral shifts.

\subsection{Resolving the Identification Crisis: The ``Info-Shock'' Protocol}
\label{subsec:identification_solution}

The ``Ridge of Indeterminacy'' identified in Figure~\ref{fig:exp_a} poses a significant challenge for econometricians attempting to estimate the model from static field data. The confound $\tau/p \approx C$ implies that without auxiliary information, one cannot distinguish between a sophisticated population in a transparent network and a less sophisticated population in an echo chamber.

However, the Connected Minds model suggests a structural solution available in laboratory settings. Since $\tau$ represents an intrinsic cognitive trait of the population (capacity for reasoning) and $p$ represents an environmental constraint (information availability), we can disentangle the parameters through an \textit{exogenous variation of information topology}.

We propose the \textbf{``Info-Shock'' Identification Strategy}, a within-subject experimental design that breaks the collinearity by enforcing a boundary condition on $p$.

\paragraph{Experimental Design.} The experiment consists of two distinct blocks played by the same cohort of subjects:
\begin{enumerate}
    \item \textbf{Block A (The Transparency Baseline):} Subjects play the Beauty Contest game with \textit{Global Information Feedback}. After each round, the full histogram of all actions from the previous round is displayed.
    \begin{itemize}
        \item \textit{Theoretical Implication:} By providing the global distribution, we exogenously force the visibility parameter to the limit $p \to 1$. As established in Section~\ref{parameter_p}, this effectively removes the network bias kernel.
    \end{itemize}
    \item \textbf{Block B (The Networked Condition):} Subjects play the game with \textit{Local Information Feedback}. Subjects observe only the actions of their immediate neighbors (or a small sample).
    \begin{itemize}
        \item \textit{Theoretical Implication:} The environment reverts to the unknown network parameter $p_{endo} < 1$, reintroducing the bias.
    \end{itemize}
\end{enumerate}

\paragraph{Estimation Strategy.}
Let $\hat{\tau}_{A}$ and $\hat{\tau}_{B}$ be the Maximum Likelihood Estimates (MLE) of the population's sophistication derived from the action data in Block A and Block B, respectively, assuming a standard Poisson CH model.

In Block A, since $p=1$, the estimator recovers the true cognitive parameter:
\begin{equation}
    \hat{\tau}_{A} \approx \tau_{\text{true}}
\end{equation}
In Block B, the network bias inflates the perceived sophistication. Following Theorem~\ref{thm:poissonshiftconvg} (Poisson-Shift Convergence), the observed behavior will converge to a Poisson distribution with the shifted mean:
\begin{equation}
    \hat{\tau}_{B} \approx \frac{\tau_{\text{true}}}{p_{endo}}
\end{equation}
By combining these two regimes, the network transparency parameter $p$ is uniquely identified as the ratio of the estimated sophistications:
\begin{equation}
    p_{endo} = \frac{\hat{\tau}_{A}}{\hat{\tau}_{B}}
\end{equation}

This identification strategy transforms the ``Ridge of Indeterminacy'' from a flaw into a feature: the magnitude of the behavioral shift between the Global and Local conditions serves as a direct measure of the network's opacity. If subjects barely change their behavior ($\hat{\tau}_{A} \approx \hat{\tau}_{B}$), the network is transparent ($p \approx 1$). If behavior shifts drastically towards higher-order strategies in the local condition ($\hat{\tau}_{B} \gg \hat{\tau}_{A}$), it confirms the presence of strong local bias ($p \ll 1$).

\subsection{The Cost of Clarity}\label{sec:num_exp_cost_of_clarity}
With understanding how it is hard to distinguish between connectedness and sophistication, we now aim to understand and quantify how increasing network transparency ($p$) impacts inequality within a population. We investigated the normative implications of transparency. Does providing agents with a clearer view of the population ($p \to 1$) improve social welfare, or does it exacerbate inequality? This section tests the hypothesis that higher transparency enables sophisticated agents to compute significantly more accurate best-responses, while naive agents (Levels 0 and 1) fail to adapt, leading to a widening wealth gap. 

\subsubsection{Experiment Setup}
The simulation utilizes a population of $N = 2,000$ agents with cognitive levels sampled from a Poisson distribution ($\tau = 1.5$). To ensure statistical robustness, the experiment employs Monte Carlo averaging over 1000 independent simulations.
\paragraph{The Payoff Function}To translate strategic success into economic outcomes, we implement an Exponential Payoff Function. In a beauty contest game where the goal is to guess $2/3$ of the average action, the payoff for agent $i$ is defined by its proximity to the target $T$:
$$P_i = \exp(-\lambda |a_i - T|)$$
where $a_i$ is the agent's action and $\lambda$ is the decay rate (set to $0.2$). This function rewards precision exponentially, allowing the strategic advantage of higher-level reasoning to manifest as significant wealth accumulation.
\paragraph{Measuring Inequality} To quantify the resulting wealth distribution, we utilize the Gini Coefficient. In this context, the Gini coefficient measures the statistical dispersion of payoffs across the population. It ranges from $0$ (perfect equality, where everyone receives the same payoff) to $1$ (perfect inequality, where one agent captures the entire reward pool). We calculate the Gini coefficient for the payoff distribution at 50 granular increments of network transparency ($p \in [0.01, 1.0]$) to observe how ``clarity'' alters the economic landscape.

\subsubsection{Results and Discussion}
Contrary to the initial hypothesis that transparency might increase inequality by strictly favoring the super-sophisticated, \emph{the data in the left panel of Figure~\ref{fig:exp_b} reveals a downward trend in the Gini coefficient as $p$ increases}. At low transparency ($p \approx 0$), the Gini coefficient is at its peak ($\sim 0.60$). In this opaque regime, success is concentrated narrowly: only specific intermediate levels (e.g., Level 2) happen to align with the target, while both lower and higher levels miscalculate significantly. As the network becomes more transparent ($p \to 1$), the Gini coefficient drops toward $\sim 0.54$. This reduction in inequality is driven by the \emph{expansion of the high-performing cohort}: rather than success being the monopoly of a single level, transparency allows a broader coalition of agents (Levels 3, 4, and 5+) to converge on the target.

The ``Relative Advantage'' metric (defined as an agent's share of wealth divided by their population percentage) provides a granular view of this transition:
\begin{itemize}
    \item \textbf{The Dilution of Level 2 Dominance:} At low $p$, Level 2 agents hold a massive, effectively monopolistic advantage ($\sim 2.5x$ their fair share) because the opaque network biases beliefs in a way that makes their specific depth of reasoning accidentally optimal. As $p$ increases, Level 2 agents remain successful (staying above parity), but their \emph{relative} dominance erodes as they are forced to share the reward pool with increasingly accurate higher-level agents.
    \item \textbf{The Rise of the Elite (Level 5+):} These agents see their relative advantage climb steadily from near-zero to become the most successful group as $p \to 1$. Crucially, this rise contributes to the lower Gini coefficient because it fills the gap between the ``middle class'' and the ``elite.'' Wealth is no longer spiked at Level 2 but is distributed across a wider gradient of sophisticated players (Levels 3, 4, and 5+).
    \item \textbf{The Stagnant Base (Level 0 \& 1):} As expected, Level 0 and 1 agents remain at the bottom of the hierarchy regardless of transparency, consistently earning near-zero or sub-parity shares.
\end{itemize}

This experiment demonstrates that while Network Transparency shifts the specific locus of maximum advantage toward the ``cognitive elite'' (Level 5+), it fundamentally reduces inequality by breaking the accidental monopoly of moderate thinkers. Clarity effectively raises the cognitive requirements for the top spot, but in doing so, it broadens the base of agents who can successfully compete.

\begin{figure}[ht]
    \centering
    \includegraphics[width=0.98\textwidth]{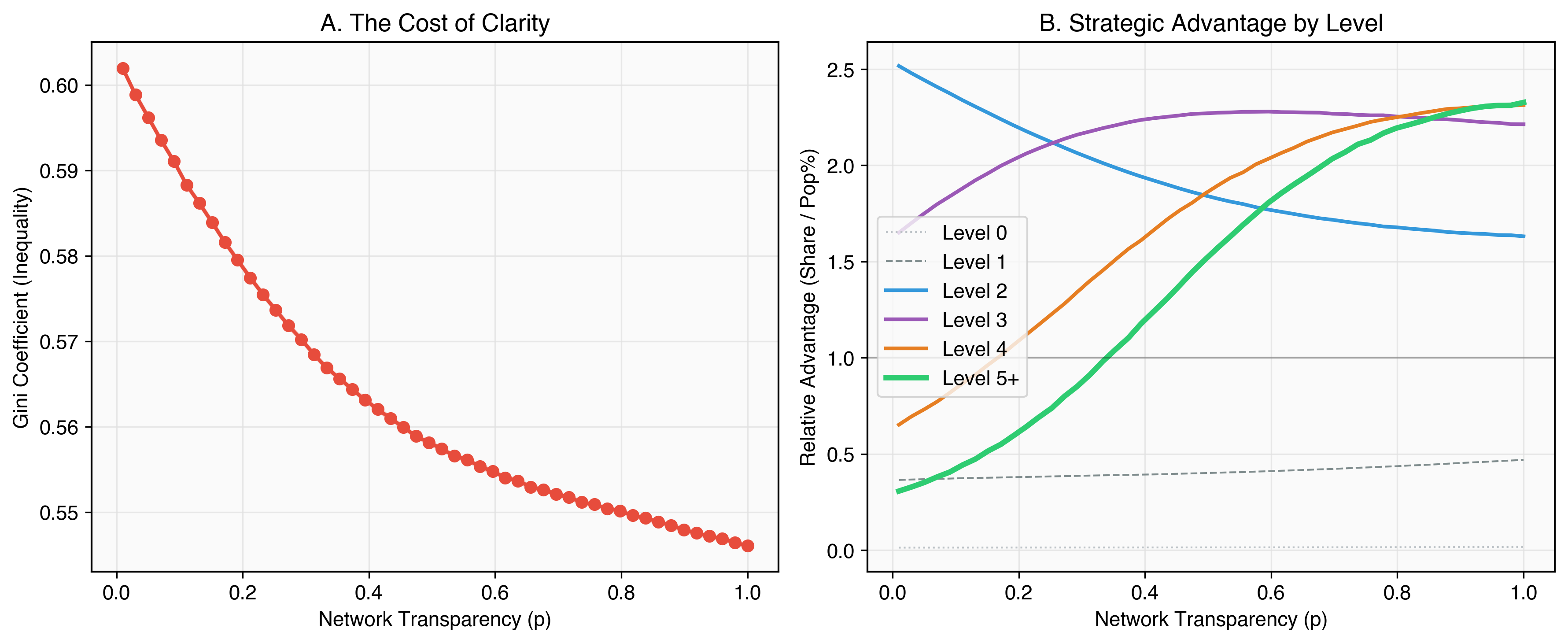}
    \caption{The Cost of Clarity. Panel A (left): Inequality (Gini) lowers with transparency. Panel B (right): The ``Cascade of Dominance'' shows how increasing $p$ successively favors higher cognitive levels (Level 2 $\to$ 3 $\to$ 4 $\to$ 5+), indicating that transparency raises the cognitive requirements for success.}
    \label{fig:exp_b}
\end{figure}

%% file: 92_proofs.tex
\section{Proofs}

\subsection*{Proof of Theorem~\ref{thm:logconcavity}}

A discrete distribution $q(h)$ is log-concave if the sequence satisfies the condition $q(h)^2 \ge q(h-1)q(h+1)$ for all $h$ in the interior of the support. This is equivalent to the concavity of the function $\psi(h) = \ln q(h)$, characterized by the second difference operator $\Delta^2 \psi(h) \le 0$.
Consider the log-belief function for the Connected Minds model:
$$\psi_k(h) = \ln g_k(h) = \ln ( \frac{p^k}{Z_k(p)} p^{-h} f(h) ) = \ln(p^k) - \ln Z_k(p) - h \ln p + \ln f(h)$$

To determine concavity, we compute the second difference with respect to $h$:
$$\Delta^2 \psi_k(h) = \psi_k(h+1) - 2\psi_k(h) + \psi_k(h-1)$$

Substituting the expansion:
$$\Delta^2 \psi_k(h) = [-(h+1)\ln p + \ln f(h+1)] - 2[-h\ln p + \ln f(h)] + [-(h-1)\ln p + \ln f(h-1)]$$
(Note that the constant terms $\ln p^k$ and $\ln Z_k(p)$ vanish upon differencing).
Grouping the linear terms involving $\ln p$:
$$-(h+1)\ln p + 2h\ln p - (h-1)\ln p = \ln p (-h - 1 + 2h - h + 1) = 0$$

The linear tilt $p^{-h}$ contributes a linear term to the log-density, which has a vanishing second derivative (or difference).
Thus, the curvature is determined entirely by the prior: 
$$\Delta^2 \psi_k(h) = \ln f(h+1) - 2\ln f(h) + \ln f(h-1) = \Delta^2 (\ln f(h))$$
By the assumption that $f$ is log-concave, $\Delta^2 (\ln f(h)) \le 0$. Therefore, $\Delta^2 \psi_k(h) \le 0$, proving that $g_k(h)$ is log-concave.

\subsection*{Proof of Proposition~\ref{prop:score_mgf}}

Let $\theta = -\ln p$. The belief $g_k(h)$ can be written in the canonical form of an exponential family: $g_k(h) \propto f(h) e^{\theta h}$. From the properties of the log-partition function $A(\theta) = \ln Z_k(\theta) - k\ln p$, we know that $\frac{\partial \ln Z_k}{\partial \theta} = -\mathbb{E}_{g_k}[k-H]$.

Using the chain rule with $\frac{d\theta}{dp} = -\frac{1}{p}$:
\begin{align}
    \frac{\partial \ln g_k(h)}{\partial p} &= \frac{\partial}{\partial p} ( (k-h)\ln p + \ln f(h) - \ln Z_k(p) ) = \frac{k-h}{p} -( \frac{\partial \ln Z_k}{\partial \theta} \cdot \frac{d\theta}{dp}) \notag\\ 
    &= \frac{k-h}{p} - ( \mathbb{E}_{g_k}[k-H] \cdot \frac{1}{p} )
\end{align}
Factoring out $1/p$ and rearranging terms into the distance from the truncation point $k$ yields the result.

\subsection*{Proof of Proposition~\ref{prop:strategic_sensitivity}}

In an exponential family, the second derivative of the log-partition function with respect to the natural parameter $\theta$ is the variance of the sufficient statistic. Thus, $\frac{\partial^2 \ln Z_k}{\partial \theta^2} = \text{Var}_{g_k}(H)$.

Applying the chain rule to the mean $-\mathbb{E}_{g_k}[k-H] = \frac{\partial \ln Z_k}{\partial \theta}$:
\begin{align}
    - \frac{\partial \mathbb{E}_{g_k}[k-H]}{\partial p} = - \frac{\partial \mathbb{E}_{g_k}[k-H]}{\partial \theta} \cdot \frac{d\theta}{dp} = \text{Var}_{g_k}(H) \cdot ( -\frac{1}{p} )
\end{align}
Since variance is non-negative, this confirms that $\mathbb{E}_{g_k}[H-k]$ is monotonically decreasing in $p$ for all $p > 0$.

\subsection*{Proof of Theorem~\ref{thm:MLRP}}

Let $p_1 > p_2$. Consider the ratio of the probability mass functions for a fixed $k$ and variable $h \in \{0, \dots, k-1\}$:
$$\Lambda(h) = \frac{\frac{1}{Z_k(p_2)} p_2^{k-h} f(h)}{\frac{1}{Z_k(p_1)} p_1^{k-h} f(h)}$$
The objective density $f(h)$ cancels out, isolating the effect of the bias:
$$\Lambda(h) = \frac{Z_k(p_1)}{Z_k(p_2)} \frac{p_2^{k-h}}{p_1^{k-h}} = [ \frac{Z_k(p_1)}{Z_k(p_2)} ( \frac{p_2}{p_1} )^k ] \cdot ( \frac{p_2}{p_1} )^{-h} = C \cdot ( \frac{p_1}{p_2} )^h$$
where $C$ is a positive constant independent of $h$.

Since $p_1 > p_2$, the ratio $\rho = \frac{p_1}{p_2} > 1$.
Thus, $\Lambda(h) = C \rho^h$ is a strictly increasing function of $h$.
An increasing likelihood ratio $\frac{g(h|p_2)}{g(h|p_1)}$ implies that the distribution with the lower parameter ($p_2$) places relatively more weight on higher values of $h$.

By standard stochastic dominance theorems, Likelihood Ratio Dominance implies first-order stochastic dominance (FOSD). Therefore, for any non-decreasing function $u(h)$:
$$\mathbb{E}_{p_2}[u(H)] \ge \mathbb{E}_{p_1}[u(H)]$$

\subsection*{Proof of Theorem~\ref{thm:hierarchyexp}}

We must compare $g_{k+1}$ (supported on $\{0, \dots, k\}$) with $g_k$ (supported on $\{0, \dots, k-1\}$). To establish likelihood ratio dominance ($g_{k+1} \succeq_{LR} g_k$), we must show that for any two types $h_2 > h_1$, the higher type is favored relatively more in $g_{k+1}$ than in $g_k$. That is:$$\frac{g_{k+1}(h_2)}{g_{k+1}(h_1)} \ge \frac{g_k(h_2)}{g_k(h_1)}$$
First, we examine the relative odds for types in the common support $\{0, \dots, k-1\}$.

$$\frac{g_{k+1}(h_2)}{g_{k+1}(h_1)} = \frac{p^{(k+1)-h_2} f(h_2)}{p^{(k+1)-h_1} f(h_1)} = \frac{p \cdot p^{k-h_2} f(h_2)}{p \cdot p^{k-h_1} f(h_1)} = p^{h_1 - h_2} \frac{f(h_2)}{f(h_1)}$$

Comparing this to the odds in $g_k$:
$$\frac{g_k(h_2)}{g_k(h_1)} = \frac{p^{k-h_2} f(h_2)}{p^{k-h_1} f(h_1)} = p^{h_1 - h_2} \frac{f(h_2)}{f(h_1)}$$

Remarkably, the relative odds of any two specific types $h_1, h_2 < k$ are identical for a level-$k$ agent and a level-$(k+1)$ agent. The extra factor of $p$ in the kernel cancels out. Thus, on the common support, the likelihood ratio is constant (equality holds).

Now consider the case where the higher type is the new upper bound, $h_2 = k$, and $h_1$ is any type in the common support ($h_1 < k$).

In the distribution $g_{k+1}$, the probability of $k$ is strictly positive ($g_{k+1}(k) > 0$), so the ratio is positive: $\frac{g_{k+1}(k)}{g_{k+1}(h_1)} > 0$. However, the distribution $g_k$ assigns zero probability to $k$ ($g_k(k) = 0$). Thus: $\frac{g_k(k)}{g_k(h_1)} = 0$.

Since the ratio in $g_{k+1}$ (positive) is strictly greater than the ratio in $g_k$ (zero), the condition holds strictly at the boundary.

Combining these two observations, the relative likelihood of a higher type never decreases as we move from $g_k$ to $g_{k+1}$. It remains constant on the intersection of their supports and strictly increases when moving to the new upper bound. Therefore, the sequence satisfies the definition of being non-decreasing in the likelihood ratio order ($g_{k+1} \succeq_{LR} g_k$), which implies the weaker condition of First-Order Stochastic Dominance.

\subsection*{Proof of Theorem~\ref{thm:poissonshiftconvg}}

Substitute the Poisson PMF into the definition of the belief:
$$g_k(h) = \frac{1}{Z_k} p^{-h} \frac{e^{-\tau} \tau^h}{h!} = \frac{e^{-\tau}}{Z_k} \frac{(\tau/p)^h}{h!}$$
The partition function $Z_k$ sums this quantity over $0 \dots k-1$: $Z_k = e^{-\tau} \sum_{j=0}^{k-1} \frac{(\tau/p)^j}{j!}$. We multiply and divide by $e^{\tau/p}$ to complete the Poisson form inside the sum:
$$Z_k = e^{-\tau} e^{\tau/p} [ e^{-\tau/p} \sum_{j=0}^{k-1} \frac{(\tau/p)^j}{j!} ]$$
The term in the brackets is exactly the Cumulative Distribution Function (CDF) of a Poisson random variable with mean $\lambda = \tau/p$, evaluated at $k-1$. Let us denote this $F_{\tau/p}(k-1)$.
Substituting $Z_k$ back into the expression for $g_k(h)$:
$$g_k(h) = \frac{e^{-\tau} (\tau/p)^h / h!}{e^{-\tau} e^{\tau/p} F_{\tau/p}(k-1)} = \frac{1}{F_{\tau/p}(k-1)} ( \frac{e^{-\tau/p} (\tau/p)^h}{h!} )\equiv \frac{\text{Poisson}(h; \tau/p)}{F_{\tau/p}(k-1)}$$
As $k \to \infty$, the CDF $F_{\tau/p}(k-1) \to 1$.
Therefore, $g_k(h) \to \text{Poisson}(h; \tau/p)$ pointwise and in total variation.

\subsection*{Proof of Theorem~\ref{thm:monotonicityagg}}
Consider two network regimes $p_1 > p_2$ (regime 2 is more opaque). We use strong induction on $k$.
\begin{enumerate}
    \item \textbf{Base Case ($k=0$):} Level-0 agents are non-strategic, so $s_0(p_1) = s_0(p_2) = s_0$. The base case holds with equality.
    \item \textbf{Inductive Step:} Assume $s_h(p_2) \ge s_h(p_1)$ for all $h < k$.
    The best response for agent $k$ is $\beta_k(g_k) = \arg\max \mathbb{E}_{h \sim g_k} [ u(s_k, s_h) ]$.
    The shift from $p_1$ to $p_2$ introduces two effects:
    \begin{itemize}
        \item \emph{Belief Shift (Direct Effect):} By Theorem~\ref{thm:MLRP}, $g_k(\cdot | p_2) \succeq_{FOSD} g_k(\cdot | p_1)$. The agent under $p_2$ places more weight on higher types $h$ (closer to $k$).
        \item \emph{Opponent Escalation (Indirect Effect):} By the inductive hypothesis, for any specific type $h$, the action $s_h(p_2) \ge s_h(p_1)$.
    \end{itemize}
    Combining these, the expected opponent action under $p_2$ is strictly higher:
    $$ \mathbb{E}_{p_2}[s_H(p_2)] = \sum_{h < k} g_k(h|p_2) s_h(p_2) \ge \sum_{h < k} g_k(h|p_2) s_h(p_1) $$
    The inequality follows from the inductive hypothesis. Next, we utilize the Belief Shift. Since $g_k(\cdot|p_2)$ stochastically dominates $g_k(\cdot|p_1)$ and the action sequence $s_h$ is non-decreasing in $h$ (by the cognitive monotonicity hypothesis), it follows that:
    $$ \sum_{h < k} g_k(h|p_2) s_h(p_1) \ge \sum_{h < k} g_k(h|p_1) s_h(p_1) = \mathbb{E}_{p_1}[s_H(p_1)] $$
    Thus, $\mathbb{E}_{p_2}[s_H] \ge \mathbb{E}_{p_1}[s_H]$.
    Under Assumption~\ref{assumption:strat_comp} (strategic complements), the best response function is increasing in the opponent's expected action. Therefore, $s_k(p_2) \ge s_k(p_1)$.
\end{enumerate}
Since $s_k(p_2) \ge s_k(p_1)$ for all $k$, the aggregate action $S^*(p_2) \ge S^*(p_1)$.

\subsection*{Proof of Proposition~\ref{prop:cognitive_FO}}

The objective function is:
$$W(p) = \mathbb{E}_f [ \mathbb{E}_{g_k}[H] ] - c + cp$$
We take the derivative with respect to $p$ to find the critical point. Using the linearity of expectation:
$$\frac{\partial W}{\partial p} = \sum_{k} f(k) \frac{\partial \mathbb{E}_{g_k}[H]}{\partial p} + c$$
Recall Proposition~\ref{prop:strategic_sensitivity}, which established the Variance-Sensitivity Identity: the sensitivity of the mean belief is proportional to the negative variance.
$$\frac{\partial \mathbb{E}_{g_k}[H]}{\partial p} = -\frac{1}{p} \text{Var}_{g_k}(H)$$
Substituting this into the welfare derivative:
$$\frac{\partial W}{\partial p} = \sum_{k} f(k) \left( -\frac{1}{p} \text{Var}_{g_k}(H) \right) + c$$
Setting the first derivative to zero for optimality: $c = \frac{1}{p^*} \sum_{k} f(k) \text{Var}_{g_k}(H)$
and rearranging yields the result:
$p^* = \frac{1}{c} \mathbb{E}_f [ \text{Var}_{g_k}(H) ]$.

\subsection*{Proof of Theorem~\ref{thm:hierarchyofsensitivity}}

From Proposition~\ref{prop:strategic_sensitivity}, the absolute sensitivity is strictly determined by the variance of the belief:
$$|\frac{\partial s_k}{\partial p}| = \frac{\alpha}{p} \text{Var}_{g_k}(H)$$
We analyze the behavior of the normalized metric $\bar{\eta}_k = \frac{\alpha}{pk} \text{Var}_{g_k}(H)$ as $k \to \infty$.

From Theorem \ref{thm:poissonshiftconvg}, we know that as $k \to \infty$, the belief $g_k$ converges to a Poisson distribution with mean $\lambda = \tau/p$. The variance of a Poisson($\lambda$) distribution is equal to its mean $\lambda$. Thus, the absolute variance saturates:
$$\lim_{k \to \infty} \text{Var}_{g_k}(H) = \frac{\tau}{p}$$
However, the normalized elasticity scales by $1/k$. Taking the limit:
$$\lim_{k \to \infty} \bar{\eta}_k(p) = \lim_{k \to \infty} \frac{\alpha}{pk} \left( \frac{\tau}{p} \right) = 0$$
Since $\bar{\eta}_k$ is positive for finite $k$ and converges to $0$, it must eventually be decreasing.
Intuitively, for small $k$, the variance grows as the support expands, increasing sensitivity. However, once $k$ exceeds the effective support of the distribution (the saturation point), the numerator $\text{Var}_{g_k}(H)$ becomes constant while the denominator $k$ continues to grow, driving the normalized sensitivity to zero.